\theoremstyle{plain}
\newtheorem{lemma}{Lemma}[section]
\newtheorem{corollary}{Corollary}[section]
\newtheorem{assumption}{Assumption}[section]
\theoremstyle{definition}
\newtheorem{example}{Example}
\theoremstyle{remark}
\newtheorem{remark}{Remark}[section]
\Crefname{assumption}{Assumption}{Assumptions}
\crefname{assumption}{Assumption}{Assumptions}
\Crefname{condition}{Condition}{Conditions}
\crefname{condition}{Condition}{Conditions}
\Crefname{secinapp}{Appendix}{Appendices}
\crefname{secinapp}{Appendix}{Appendices}
\renewcommand{\thmcontinues}[1]{%
\ifcsname hyperref\endcsname
    \hyperref[#1]{continued}%
\else
    continued%
\fi%
}
\definecolor{friendlygreen}{RGB}{17, 119, 51}
\definecolor{friendlyolive}{RGB}{153, 153, 51}
\definecolor{friendlypurple}{RGB}{170, 68, 153}
\definecolor{friendlyindigo}{RGB}{51, 34, 136}
\definecolor{friendlyblue}{RGB}{0, 119, 187}
\newcommand{\colora}[1]{{\color{friendlygreen}#1}}
\newcommand{\colorb}[1]{{\color{friendlypurple}#1}}
\newcommand{\colorc}[1]{{\color{friendlyblue}#1}}
\DeclareMathOperator{\circlearrow}{\hbox{$\circ$}\kern-1.5pt\hbox{$\rightarrow$}}
\DeclareMathOperator{\circlecircle}{\hbox{$\circ$}\kern-1.2pt\hbox{$--$}\kern-1.5pt\hbox{$\circ$}}
\DeclareMathOperator*{\argmax}{arg\,max}
\DeclareMathOperator*{\argmin}{arg\,min}
\def\ci{\perp\!\!\!\perp}
\newcommand{\E}{\mathbb{E}}
\newcommand{\R}{\mathbb{R}}
\newcommand{\cX}{\mathcal{X}}
\newcommand{\cM}{\mathcal{M}}
\newcommand{\cD}{\mathcal{D}}
\newcommand{\cS}{\mathcal{S}}
\newcommand{\cA}{\mathcal{A}}
\newcommand{\1}[1]{\mathbf{1}\{#1\}}
\newcommand{\Pe}{\mathbf{\Pi}^{e}}
\newcommand{\Pel}{\mathbf{\Pi}^{e}_{\leq}}
\newcommand{\Peg}{\mathbf{\Pi}^{e}_{\geq}}
\newcommand{\Petl}{\tilde{\mathbf{\Pi}}^{e}_{\leq}}
\newcommand{\Petg}{\tilde{\mathbf{\Pi}}^{e}_{\geq}}
\title{Just Trial Once: Ongoing Causal Validation of Machine Learning Models}
\author[1]{Jacob M. Chen}
\author[1]{Michael Oberst}
\affil[1]{%
    Department of Computer Science\\
    Johns Hopkins University
}
\begin{document}
\maketitle

\begin{abstract}
    Machine learning (ML) models are increasingly used as decision-support tools in high-risk domains. Evaluating the causal impact of deploying such models can be done with a randomized controlled trial (RCT) that randomizes users to ML vs. control groups and assesses the effect on relevant outcomes. However, ML models are inevitably updated over time, and we often lack evidence for the causal impact of these updates. While the causal effect could be repeatedly validated with ongoing RCTs, such experiments are expensive and time-consuming to run. In this work, we present an alternative solution: using only data from a prior RCT, we give conditions under which the causal impact of a new ML model can be precisely bounded or estimated, even if it was not included in the RCT. Our assumptions incorporate two realistic constraints: ML predictions are often deterministic, and their impacts depend on user trust in the model. Based on our analysis, we give recommendations for trial designs that maximize our ability to assess future versions of an ML model. Our hope is that our trial design recommendations will save practitioners time and resources while allowing for quicker deployments of updates to ML models.
\end{abstract}

\section{Introduction}
\label{sec:intro}

Machine learning (ML) models are increasingly deployed in high-risk domains like healthcare and criminal justice as tools to support human decision-makers.  For instance, in healthcare, ML-powered decision-support tools (ML-DSTs) are widespread, including early warning systems for sepsis \citep{Adams2022-e7, Sendak2020-c7, Boussina2024-30}, computer-assisted decision-support for antibiotic treatment decisions \citep{Gohil2024-10,Gohil2024-82}, and a variety of tools for computer-aided diagnostics in radiology and pathology, with the FDA having cleared or approved over 1,000 AI/ML-enabled devices to date \citep{FDA2024-c4}. Although these models often exhibit high accuracy, it is not always clear whether their deployment actually leads to better decisions, and thus, better downstream outcomes.  In healthcare, for instance, we are interested not only in model accuracy, but also whether deployment of an ML-DST improves health outcomes for patients. 

The gold standard evaluation of ML-enabled decision-support is to assess impact in a randomized controlled trial (RCT), typically structured as a cluster RCT, where decision-makers (e.g., clinicians in a given hospital) are randomized to an ML-DST or no ML-DST. Such trials are becoming more common in healthcare~\citep{Han2024-0b} and criminal justice~\citep{Imai2023-04}. Examples include recent \enquote{failed trials} like the PROTEUS trial of ML-assisted diagnosis of stress echocardiography~\citep{Upton2024-aa} and trials with more positive results, such as the INSPIRE trials for antibiotic recommendations powered by ML predictions of resistance likelihood~\citep{Gohil2024-10,Gohil2024-82}. These trials provide rigorous evidence for the impact of deploying specific ML-enabled systems (and their underlying models), and the broader research community recognizes the need for more randomized trials~\citep{ouyang2024we} and evaluation of ML systems as interventions~\citep{joshi2025ai}.

However, the traditional RCT framework is not designed for ML-enabled systems, which (unlike drugs) are often updated frequently to handle performance degradation. Even when RCT data is available for a single version of an ML-DST, it is not obvious whether those results apply to later models, and running additional RCTs to verify continued effectiveness is both time-consuming and costly.

\begin{figure*}[ht]
    \begin{center}
        \begin{subfigure}[t]{0.32\textwidth}
            \centering
            \includegraphics[align=c,bmargin=6pt,scale=0.25]{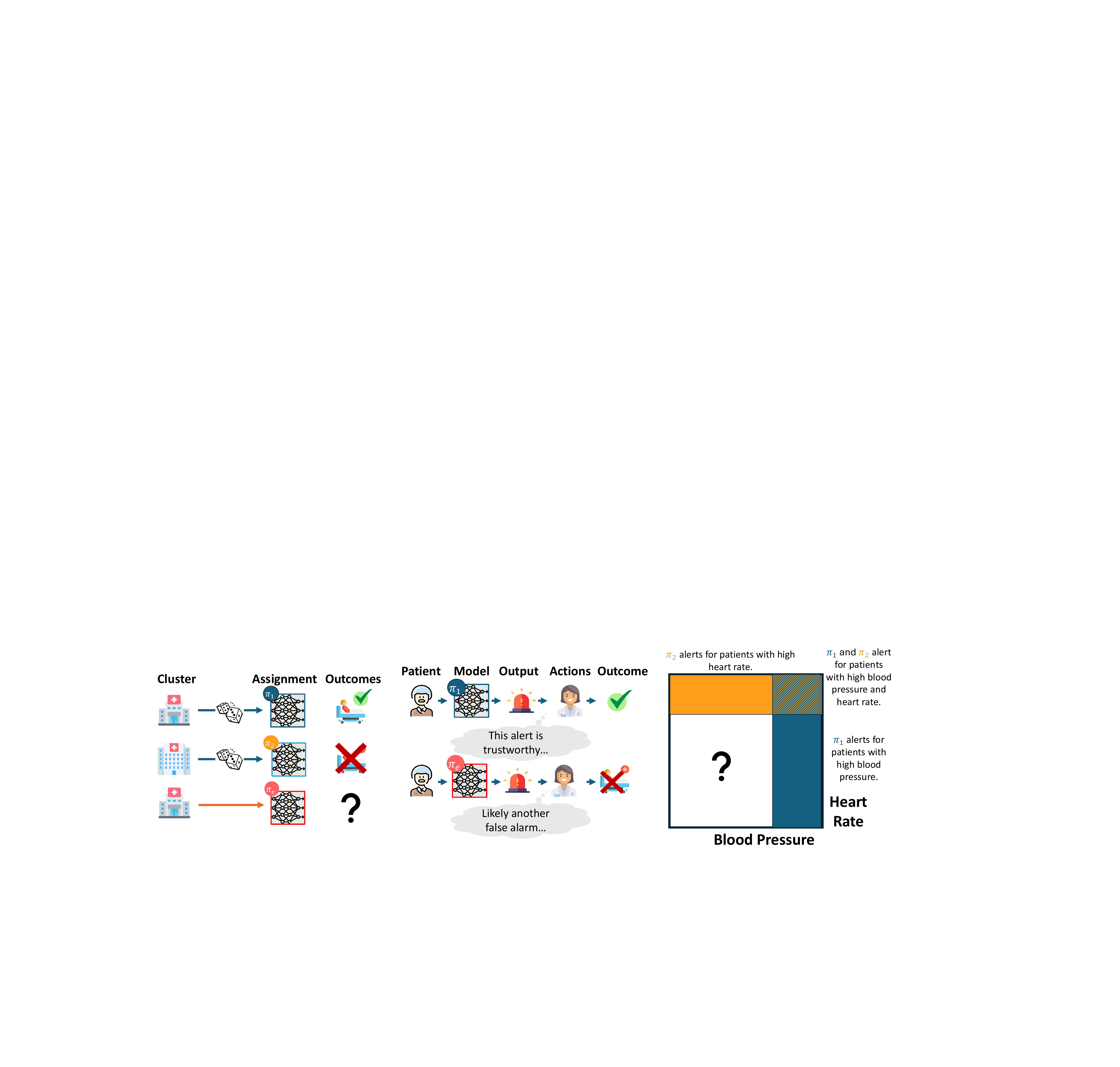}
            \caption{}
            \label{fig:goal}
        \end{subfigure}
        \begin{subfigure}[t]{0.32\textwidth}
            \centering
            \includegraphics[align=c,bmargin=6pt,scale=0.22]{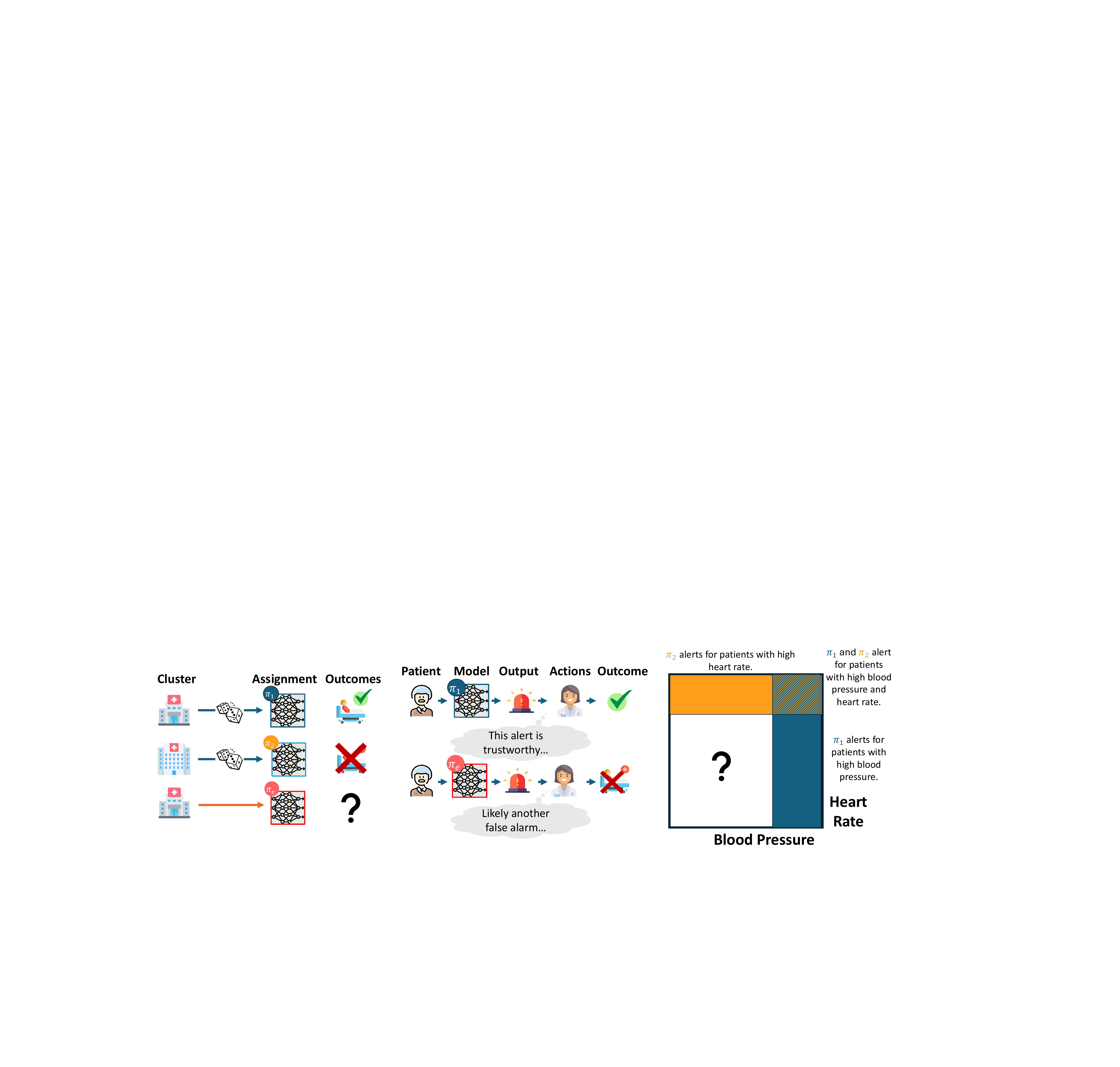}
            \caption{}
            \label{fig:performance}
        \end{subfigure}
        \begin{subfigure}[t]{0.32\textwidth}
            \centering
            \includegraphics[align=c,scale=0.32]{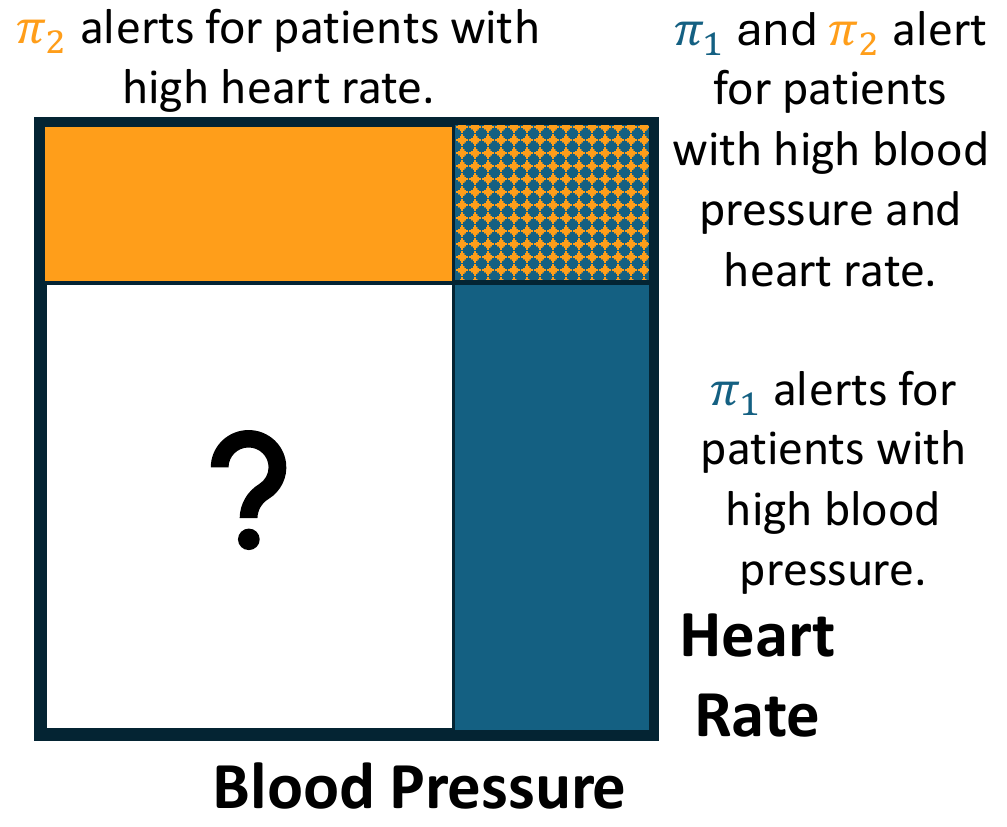}
            \caption{}
            \label{fig:overlap}
        \end{subfigure}
    \end{center}
    \caption{(\subref{fig:goal}) The goal of this paper is to predict the causal impact of deploying a new model $\pi_e$, given data from a cluster randomized trial that randomizes sets of users (e.g., hospitals) to one of a set of trial models that does not include $\pi_e$. 
    (\subref{fig:performance}) The first challenge: Relevant outcomes (e.g., of patients) are not only influenced by model outputs, but also by how users actually respond to the outputs of model-based decision-support, which may itself be affected by the perceived reliability / performance of the model.
    (\subref{fig:overlap}) The second challenge: There may exist some subset of cases (e.g., patients) for whom we never observe certain model outputs, making it impossible to give precise predictions for outcomes of patients in that group. In this example, $\pi_1$ alerts for patients with high blood pressure, and $\pi_2$ alerts for patients with high heart rate; thus, patients with high blood pressure and high heart rate receive an alert from both $\pi_1$ and $\pi_2$. However, patients in the white region -- those who have both low to medium blood pressure and heart rate -- never receive an alert from either trialed model.}
    \label{fig:conceptual_illustration}
\end{figure*}

Our work addresses this challenge from a methodological perspective, as illustrated in~\cref{fig:conceptual_illustration}:  We formalize conditions under which data from an existing RCT can be used to precisely infer or bound the causal impact of deploying models that were not included in the original RCT. We take into account two important practical considerations: First, model performance (e.g., accuracy at diagnosing disease) will influence user trust in the system, and thereby indirectly influence outcomes (\cref{fig:performance}). Second, while the deployment of DSTs is often randomized, the predictions themselves are not typically randomized (\cref{fig:overlap}), since doing so would undermine trust (e.g., by raising alerts randomly).  Hence, there may be some combinations of model outputs (e.g., diagnoses) and inputs (e.g., patients) that we never observe.

Under limited assumptions that incorporate these considerations, we derive bounds on the causal impact of deploying a new model. Crucially, we show that both of our main assumptions can be checked using RCT data that includes at least two models with differing performance characteristics. In a simulation study, we show how our framework yields more rigorous conclusions about the value of model updates, as compared to naive approaches that only judge models based on their raw performance.

Our results have practical implications for post-trial analysis and pre-trial design. First, evaluating new models using historical trial data is possible under reasonably limited assumptions, but not all alternative models can be precisely evaluated in this way. Second, our results suggest a benefit to running RCTs with multiple ML models to maximize the ability to estimate causal impacts in future model updates. 

To summarize, our contributions are as follows:
\begin{itemize}
    \item We provide assumptions (\cref{asmp:scm,asmp:more_reliable_better,asmp:A_control_value,asmp:bounded_Y}) under which we derive bounds (\cref{thm:bounds_with_control}) on the effect of deploying a new ML model, given data from a prior RCT. Our bounds are tight, and cannot be improved without further assumptions (\cref{prop:non-id}). 
    \item We provide a simple estimator for these bounds and a procedure for generating asymptotically valid confidence intervals (\cref{cor:empirical_estimation}). We also show that our core assumptions can be falsified via hypothesis tests constructed from RCT data trialing multiple models (\cref{prop:falsify_monotonicity,prop:falsify_neutral_actions}).
    \item We provide recommendations for pre-trial design and post-trial analysis in light of our results (\cref{sec:implications}), and demonstrate in a simulation study (\cref{sec:simulation}) that our bounds provide a more informative tool to select among model updates as compared to using the raw performance (e.g., accuracy) of updated models.
\end{itemize}

{\bf Related literature:} 
Our work is related to \textit{off-policy policy evaluation} in causal inference and reinforcement learning~\citep{uehara2022review}. An ML-DST can be viewed as a deterministic policy that chooses actions (i.e., predictions or alerts to raise) based on context (i.e., inputs to the model) with the goal of obtaining some reward (i.e., positively influencing outcomes for patients). Two critical distinctions arise in our work versus the standard setting: First, the policies that are present in retrospective data (in our case, from a trial) are deterministic rather than random, leading to violation of the common assumption that, for a given context, there is a positive probability of seeing any action. Second, we allow for the fact that actions taken for one patient can influence outcomes for other patients.

Our work is also related to causal evaluations of AI-assisted decision-making in criminal justice settings \citep{Imai2023-04,ben2024does,ben2025safe}, but our goal differs: Rather than evaluating the impact of AI-assistance on the accuracy of (observable) predictions made by a human, we are interested in the total effect of model deployment on downstream outcomes. Finally, our work is connected to the study of causal transportability where the goal is to infer the effect of a known intervention from an RCT onto a new target population where randomization is difficult, expensive, or impossible \citep{pearl2011transportability, stuart2011use}. This is similar to our setting as we are also attempting to draw inference from RCT data. However, our problem differs in that we would like to infer the effect of an unseen intervention (i.e., a new model with no historical trial data) on the same population as in the original RCT.

\section{Model and Problem Setup}\label{sec:model_and_setup}

\textbf{Notation}: In the rest of this paper, we use the terms {\it model} and {\it policy} interchangeably. We use upper case letters $X$ to denote a random variable, calligraphic font $\cX$ to denote the space of possible values, and lower-case letters $x$ to denote a specific realization of a random variable. We assume that the causal structure of an RCT is modeled by a directed acyclic graph (DAG) $\mathcal{G}$ over a set of vertices ${\bf V} = \{A, Y, D, X, \Pi, M\}$, where 
$A \in \cA$ represents the output (or \enquote{action}) of the deployed model, 
$Y \in \R$ represents an outcome of interest,
$D \in \cD$ represents the cluster to which a user is assigned, 
$X \in \cX$ represents covariates used as inputs to the ML model, 
$\Pi \in \varPi$ represents the specific ML model that was deployed, and 
$M \in \R$ represents model performance, which we represent as a real number. We assume that model performance is computable for any model via some functional $f_M(\pi)$ (e.g., the accuracy, precision, recall, sensitivity, specificity, or some combination, computed on a held-out dataset where $\pi(X)$ is considered the model prediction). We also use the indicator function $\1{S}$ that is equal to $1$ if the event $S$ is true, and $0$ otherwise.
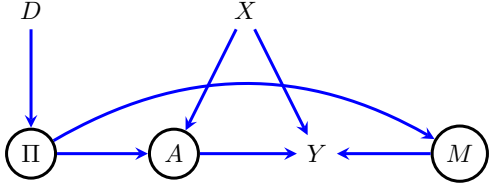
\begin{figure}[t]
    \centering
    \scalebox{0.95}{
        \begin{tikzpicture}[>=stealth, node distance=2cm]
            \tikzstyle{square} = [draw, thick, minimum size=1.0mm, inner sep=3pt]
            \begin{scope}
                \path[->, very thick]
                node[circle, draw] (a) {$A$}
                node[right of=a] (y) {$Y$}
                node[above of=a, xshift=1cm] (x) {$X$}
                node[right of=y, circle, draw] (m) {$M$}
                node[left of=a, circle, draw] (pi) {$\Pi$}
                node[above of=pi] (d) {$D$}
                
                (a) edge[blue] (y)
                (x) edge[blue] (a)
                (x) edge[blue] (y)
                (d) edge[blue] (pi)
                (pi) edge[blue] (a)
                (m) edge[blue] (y)
                (pi) edge[blue, bend left] (m)
                ;
            \end{scope}
        \end{tikzpicture}
    }
    \caption{The directed acyclic graph (DAG) $\mathcal{G}$ depicting the causal relationships in our problem setup (\cref{asmp:scm}). We draw circles around the nodes $\Pi, A,$ and $M$ to represent that these variables are deterministic given their parents (the nodes that have a direct edge to them).}
    \label{fig:dag_setup}
\end{figure}

\setcounter{example}{0}
\begin{example}[Alerting Systems]\label{exp:motivating_example}
    Suppose we are interested in the effect of deploying a DST that monitors patient vital signs and selectively raises an \enquote{alert}. A common application of these systems is detecting the onset of sepsis and alerting clinicians to facilitate timely intervention \citep{Adams2022-e7,Sendak2020-c7,Boussina2024-30}. Here, the inputs $X$ to the model are typically vital signs, the outcome $Y$ may be long-term patient survival, and the outputs $\cA$ include raising an alert ($A = 1$) or not $(A = 0)$. The variable $M$ in this setting could correspond to the false alarm rate of the alerting policy $\Pi$ when it comes to predicting the onset of disease within the next hour.  Note that the label used for computing performance here (onset of disease) differs from the patient outcome of interest $Y$ (survival). A control arm of \enquote{no assistance} can be represented as a deterministic policy that never raises an alert.
\end{example} 

\begin{example}[Computer Assisted Diagnosis]\label{exp:motivating_example_2}
    Suppose we are interested in the effect of deploying a diagnostic model that assists with screening for some disease.  Here, the outcome of interest $Y$ may be long-term patient survival, $X$ would include inputs to the model (e.g., medical imaging, past medical history), and the set of actions $\cA$ could include a set of $K$ possible diagnostic labels as well as the option of deferring to a human expert, such that $\cA = \{ \varnothing, 1, \ldots, K\}$, where $\varnothing$ denotes deferral. The variable $M$ in this setting could represent the overall accuracy of the diagnostic model at predicting some true diagnostic label or some combination of its sensitivity and specificity when it does not defer. In a randomized trial where the control arm consists of \enquote{no assistance}, the resulting \enquote{policy} in the control arm could be viewed as a deterministic policy that always defers.
\end{example}

For concreteness in the remainder of this paper, we will primarily use the language of healthcare applications (e.g., patients, likelihood of disease onset, clinical outcomes, etc). Our assumed causal structure can be represented by the structural causal model (SCM) \citep{pearl2009causality} that we define below, which is consistent with the DAG shown in~\cref{fig:dag_setup}.
\begin{assumption}[Data Generating Process]
    The random variables $D \in \cD$, $\Pi \in \varPi$, $X \in \cX$, $A \in \cA$, and $Y \in \mathbb{R}$ are generated according to the SCM
    \begin{align*}
        D &= f_D(\epsilon_D), & X &= f_X(\epsilon_X), \\
        \Pi &= \pi_D, & M &= f_M(\Pi), \\
        A &= \Pi(X) & Y &= f_Y(A, X, M, \epsilon_Y),
    \end{align*}
    where $\epsilon_Y, \epsilon_D,$ and $\epsilon_X$ are mutually independent.
    \label{asmp:scm}
\end{assumption}

We make a few notes regarding~\cref{asmp:scm}. First, the randomization into a specific policy (signified by $D$) is independent of covariates $X$.
Second, the policy $\Pi$ is entirely determined by $D$, model performance $M$ is entirely determined by $\Pi$ (and observable), and the output $A$ is a deterministic function of $X$, based on $\Pi$.
This deterministic nature of model outputs can create difficulties in evaluating new models; in particular, we are unlikely to see all possible outputs $a \in \cA$ for all types of patients $X$.
Finally, we assume that outcomes $Y$ are not only a function of covariates $X$ and the model output $A$, but also the performance $M$ of the model\footnote{We discuss defining $M$ for the control arm in~\cref{sec:control_arms}.}. Note that we assume that $M$ and $A$ are sufficient to capture the impact of a deployed model on outcomes.

\section{Identification and Bounds}
\label{sec:identification}
\label{sec:control_arms}

\textbf{Goal}: We adopt potential outcomes notation~\citep{richardson2013single} where we use $Y(A = a, M = m) \coloneqq f_Y(a, X, m, \epsilon_Y)$ to denote counterfactual outcomes, representing the value of $Y$ that would be observed if we had taken action $A = a$ with a model whose performance is given by $M = m$\footnote{We defer a more detailed discussion of potential outcomes and other causal inference background to~\cref{app:overview}.}. Using this notation, our goal is to infer expected outcomes if we had deployed a new model / policy $\pi_e$ not trialed in the original RCT, i.e. 
\begin{equation}\label{eq:target_estimand}
 \E[Y(\pi_e)] = \E[Y(A=\pi_e, M=f_M(\pi_e))]\footnote{In the rest of this paper, we use $\pi_e$ as shorthand for $\pi_e(X)$.}
\end{equation}
We refer to $\E[Y(A=\pi_e, M=f_M(\pi_e))]$ as our {\it target estimand} or {\it policy value}. Once this value is inferred, one could compute the causal effect of deploying $\pi_e$ as opposed to any other trialed model $\pi_i$ by evaluating $\E[Y(A=\pi_e, M=f_M(\pi_e))] - \E[Y(A=\pi_i, M=f_M(\pi_i))]$.

\begin{example}[continues=exp:motivating_example]
    Suppose that the trialed model alerts based on thresholding a pre-defined risk score $r(x)$ that is a function of vital signs (e.g., systolic blood pressure, respiratory rate, etc). Suppose that an initial RCT assigns patients to a control arm, $D=0$ where $\pi_0 = 0$ (alerts are never raised), and a treatment arm, $D=1$ where the model raises alerts using the threshold $T^*$, i.e., $\pi_1(x) \coloneqq \1{r(x) > T^*}$. Suppose we want to use this RCT data to evaluate the impact of an alternative model with a lower threshold, $\pi_l(x) \coloneqq \1{r(x) > T^l}$ where $T^l < T^*$. \cref{fig:illustration} visually demonstrates the challenges of this inference task for $\pi_l$ as we never observe alerts for patients with $r(x) \in [T^l, T^*]$.
\end{example}

\begin{figure}[t]
    \centering
    \scalebox{1}{
        \includegraphics[scale=0.28]{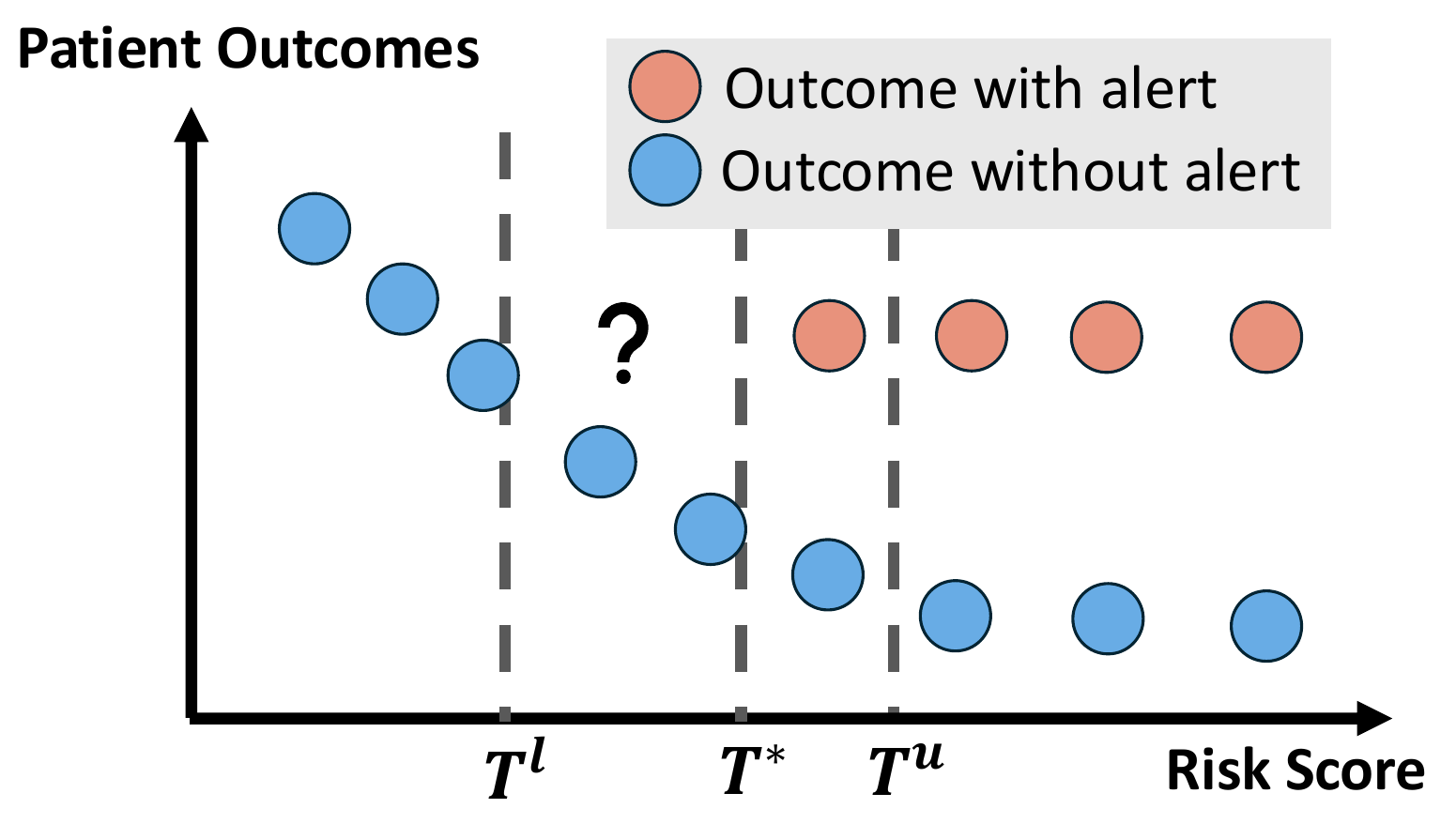}
    }
    \caption{Illustration from~\cref{exp:motivating_example}, demonstrating the challenge of our task. We can use data from both the control arm, that does not raise alerts, and the treatment arm, that raises alerts for patients with risk scores greater than $T^*$, to infer patient outcomes when $\pi_l$ does not raise alerts for patients with risk scores less than $T^l$. Next, we can use data from $\pi_1$ to infer patient outcomes when $\pi_l$ raises alerts for patients with risk scores greater than $T^*$. However, we do not know what patient outcomes are when $\pi_l$ raises alerts for patients with risk scores between $T^l$ and $T^*$. 
    }
    \label{fig:illustration}
\end{figure}

In order to estimate the policy value in~\cref{eq:target_estimand}, we introduce a few key assumptions that relate outcomes under different hypothetical models / policies.  First, since it is unlikely that our target policy $\pi_e$ has exactly the same performance $M$ as policies trialed during the RCT, we need to assume a relationship between outcomes and model performance.
\begin{assumption}[Performance Monotonicity]\label{asmp:more_reliable_better}
    Potential outcomes are non-decreasing in model performance, i.e., if $m_i < m_j$, then for all $a \in \cA$, 
    \begin{equation*}
        Y(A = a, M = m_i) \leq Y(A = a, M = m_j)
    \end{equation*}
\end{assumption}
\Cref{asmp:more_reliable_better} says that improvements in model performance do not harm patient outcomes, given a fixed action. For instance, in the context of~\cref{exp:motivating_example}, we might expect that, for a given patient, having an alarm raised by a high-performance model would not lead to worse outcomes than if that alert had been raised by a model with frequent false alarms. Note that this assumption is stated with a fixed action $A = a$ and does \textit{not} imply that improving performance alone is guaranteed to improve outcomes -- a change in model performance is generally associated with a change in outputs, which may have its own effect on outcomes. In~\cref{sec:simulation}, we give a case where improved overall performance (accuracy) is associated with worse outcomes. \Cref{asmp:more_reliable_better}, however, may not always hold; for instance, clinicians may begin paying less attention to patients that receive a low risk score from the DST, even if it is wrong, as their trust in the system increases. Such a scenario would violate~\cref{asmp:more_reliable_better}. To address this, we propose a method for falsifying~\cref{asmp:more_reliable_better} below.

\begin{restatable}[Falsification of~\cref{asmp:more_reliable_better}]{proposition}{FalsifyMonotonicity}\label{prop:falsify_monotonicity}
    Let $\cX$ denote the full space of possible covariate values. Under~\cref{asmp:scm}, given data from an RCT that includes at least two trialed models $\pi_1$ and $\pi_2$ with different levels of performance $f_M(\pi_1) < f_M(\pi_2)$, and whose actions agree on a non-empty set of patients $\cX_{\text{agree}} \coloneqq \{x \in \cX \mid \pi_1(x) = \pi_2(x) \}$ such that $P(X \in \cX_{\text{agree}}) > 0$, the observation that
    \begin{equation*}
        \E[Y \mid X \in \cX_{\text{agree}}, \Pi = \pi_2] < \E[Y \mid X \in \cX_{\text{agree}}, \Pi = \pi_1],
    \end{equation*}
    implies that~\cref{asmp:more_reliable_better} is false.
\end{restatable}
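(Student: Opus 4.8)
The plan is to prove the contrapositive: assuming \cref{asmp:more_reliable_better} holds, I would show that the observed inequality must run in the opposite (weak) direction, namely $\E[Y \mid X \in \cX_{\text{agree}}, \Pi = \pi_2] \geq \E[Y \mid X \in \cX_{\text{agree}}, \Pi = \pi_1]$, which directly contradicts the stated strict inequality and therefore forces \cref{asmp:more_reliable_better} to be false. The engine of the argument is the randomization built into \cref{asmp:scm}: since $D = f_D(\epsilon_D)$ with $\epsilon_D$ independent of $\epsilon_X$ and $\epsilon_Y$, and $\Pi = \pi_D$ is a function of $D$ alone, the deployed model satisfies $\Pi \ci (X, \epsilon_Y)$.

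First I would rewrite each observed conditional expectation as a counterfactual expectation. Fix $j \in \{1,2\}$ and write $m_j \coloneqq f_M(\pi_j)$. Conditioning on $\Pi = \pi_j$, the SCM gives $A = \pi_j(X)$ and $M = m_j$, so $Y = f_Y(\pi_j(X), X, m_j, \epsilon_Y)$. On the event $\{X \in \cX_{\text{agree}}\}$ we have $\pi_1(X) = \pi_2(X)$; denote this common action by $a(X)$. By the independence $\Pi \ci (X, \epsilon_Y)$, conditioning additionally on $\Pi = \pi_j$ does not alter the conditional law of $(X, \epsilon_Y)$ given $X \in \cX_{\text{agree}}$, so
\begin{equation*}
    \E[Y \mid X \in \cX_{\text{agree}}, \Pi = \pi_j] = \E\!\left[f_Y(a(X), X, m_j, \epsilon_Y) \mid X \in \cX_{\text{agree}}\right] = \E\!\left[Y(A = a(X), M = m_j) \mid X \in \cX_{\text{agree}}\right].
\end{equation*}
The crucial observation is that, restricted to $\cX_{\text{agree}}$, the two conditional expectations (for $j=1$ and $j=2$) differ \emph{only} in the performance level $m_j$, because the realized action $a(X)$ is identical across the two arms.

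Next I would apply \cref{asmp:more_reliable_better}. Since $m_1 < m_2$, monotonicity yields, pointwise and almost surely in $(X, \epsilon_Y)$, that $Y(A = a(X), M = m_1) = f_Y(a(X), X, m_1, \epsilon_Y) \leq f_Y(a(X), X, m_2, \epsilon_Y) = Y(A = a(X), M = m_2)$ on $\cX_{\text{agree}}$. Taking conditional expectations preserves this inequality, giving $\E[Y \mid X \in \cX_{\text{agree}}, \Pi = \pi_1] \leq \E[Y \mid X \in \cX_{\text{agree}}, \Pi = \pi_2]$, which is the contradiction sought. The hypothesis $P(X \in \cX_{\text{agree}}) > 0$ is exactly what guarantees that both conditional expectations are well defined.

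The step I expect to require the most care is the reduction from the observed conditional expectation to the counterfactual expectation: one must invoke the randomization independence $\Pi \ci (X, \epsilon_Y)$ so that conditioning on the arm $\Pi = \pi_j$ leaves the joint distribution of covariates and outcome noise untouched, and one must explicitly use the fact that on $\cX_{\text{agree}}$ the realized action coincides across the two arms, which isolates the performance argument $M$ as the sole source of difference. Everything after that is a monotone comparison of two counterfactual means.
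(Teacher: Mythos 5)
Your proposal is correct and follows essentially the same route as the paper's proof: argue the contrapositive, use the randomization in \cref{asmp:scm} to identify each arm's observed conditional mean on $\cX_{\text{agree}}$ with the counterfactual mean $\E[Y(A=a(X), M=m_j)\mid X\in\cX_{\text{agree}}]$, and then apply \cref{asmp:more_reliable_better} with the common action to get the weak reverse inequality. The only cosmetic difference is that the paper first establishes the inequality pointwise in $x$ (via consistency and conditional ignorability) and then aggregates over $P(x\mid X\in\cX_{\text{agree}})$ using $X\ci\Pi$, whereas you work directly with the structural equation and $\epsilon_Y$ in a single conditional expectation; the underlying facts invoked are identical.
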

The proof for \cref{prop:falsify_monotonicity}, along with all other proofs, is given in \cref{app:proofs}. While it is not possible to guarantee that~\cref{asmp:more_reliable_better} is true in general (over all possible models), it has observable implications in an RCT that we can check. In particular, this result suggests a simple hypothesis test that we can use to falsify~\cref{asmp:more_reliable_better}: compare two empirical means in the data and check if outcomes are lower under $\pi_2$ than under $\pi_1$ on those cases where $\pi_1$ and $\pi_2$ agree on their actions. 

In our discussions thus far, we have considered the control arm to be just another policy. However, this framework creates practical difficulties when considering the model performance of a control arm. For instance, suppose the relevant metric for model performance is the false positivity rate (as in~\cref{exp:motivating_example}); then, $M$ is not clearly defined because the control arm never raises alerts. Alternatively, if the relevant metric were model accuracy under no deferral (as in~\cref{exp:motivating_example_2}), then the performance of the control arm would be similarly undefined. One way of resolving this tension is to presume the existence of a \enquote{neutral} action (e.g., not raising an alert, or deferring to clinicians).

\begin{assumption}[Neutral Actions]\label{asmp:A_control_value}
    There exists a \enquote{neutral action} $a_0 \in \cA$ such that the potential outcome of $Y$ under $a_0$ does not depend on model performance $M$.  That is, for any two values $m_1, m_2$, including when $m_1 \neq m_2$,
    \begin{equation}
    Y(A=a_0, M=m_1) = Y(A=a_0, M = m_2),
    \end{equation}
    and in these cases we use the shorthand $Y(A = a_0)$ to reflect the fact that the outcome does not depend on $M$.
\end{assumption}

\Cref{asmp:A_control_value} is a sufficient condition for leveraging data from the control arm of an RCT in our setting, and it implies that,  when a model output is \enquote{neutral} (e.g., no alert in~\cref{exp:motivating_example}, or deferral in~\cref{exp:motivating_example_2}), decision-makers act as they would if no model were deployed. However, note that there may not always exist a ``neutral'' output, especially if decision-makers tend to pay attention to model performance for all possible model outputs\footnote{While all of our results make use of~\cref{asmp:A_control_value}, they can also be re-written to hold if~\cref{asmp:A_control_value} is false by re-defining $a_0$ as some placeholder model output that is never observed under any model (including $\pi_e$), such that indicators like $\1{\pi_e(x) = a_0}$ are always zero.}. Again, we propose a method for falsifying~\cref{asmp:A_control_value} below.

\begin{restatable}[Falsification of~\cref{asmp:A_control_value}]{proposition}{FalsifyNeutralActions}\label{prop:falsify_neutral_actions}
    Under~\cref{asmp:scm}, given data from an RCT that includes at least two trialed models $\pi_1$ and $\pi_2$ with different levels of performance $f_M(\pi_1) < f_M(\pi_2)$, and which both models take the neutral action $a_0$ on a non-empty set of patients $\cX_{a_0} \coloneqq \{x \in \cX \mid \pi_1(x) = \pi_2(x) = a_0 \}$ such that $P(X \in \cX_{a_0}) > 0$, the observation that 
    \begin{equation*}
        \E[Y \mid X \in \cX_{a_0}, \Pi = \pi_2] \neq \E[Y \mid X \in \cX_{a_0}, \Pi = \pi_1],
    \end{equation*}
    implies that~\cref{asmp:A_control_value} is false.
\end{restatable}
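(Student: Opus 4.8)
The plan is to argue by contraposition, exactly paralleling the structure behind \cref{prop:falsify_monotonicity}: assuming the Neutral Actions assumption (\cref{asmp:A_control_value}) holds, I would show that the two conditional means must in fact be \emph{equal}, so that any observed discrepancy between them rules out the assumption. The only structural difference from \cref{prop:falsify_monotonicity} is that restricting to the neutral action $a_0$ forces an equality rather than an inequality, since by definition the neutral action removes any dependence of the outcome on $M$.

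First I would unpack what the SCM in \cref{asmp:scm} delivers on the set $\cX_{a_0}$. For a patient with covariate value $x \in \cX_{a_0}$ deployed under policy $\pi_j$ (for $j \in \{1,2\}$), the action is $A = \pi_j(x) = a_0$ by the definition of $\cX_{a_0}$, and the performance is $M = f_M(\pi_j) = m_j$. Hence the realized outcome is $Y = f_Y(a_0, X, m_j, \epsilon_Y) = Y(A = a_0, M = m_j)$ in potential-outcome notation. Under \cref{asmp:A_control_value} this equals the $M$-free quantity $Y(A = a_0)$, which, as a function of the exogenous variables, depends only on $X$ and $\epsilon_Y$ and not on which of the two models was deployed; write this common value as $g(X, \epsilon_Y)$, emphasizing that it carries no dependence on $M$ (and hence none on $\Pi$).

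Next I would invoke the independence structure. Because $\Pi = \pi_D$ is a deterministic function of $D = f_D(\epsilon_D)$ and $\epsilon_D$ is mutually independent of $(\epsilon_X, \epsilon_Y)$ by \cref{asmp:scm}, the pair $(X, \epsilon_Y) \ci \Pi$ (recall $X = f_X(\epsilon_X)$). Since the conditioning event $\{X \in \cX_{a_0}\}$ is measurable with respect to $X$ alone, conditioning additionally on $\Pi = \pi_j$ does not alter the conditional law of $(X, \epsilon_Y)$ given $X \in \cX_{a_0}$. Concretely,
\[
\E[Y \mid X \in \cX_{a_0}, \Pi = \pi_j] = \E[g(X, \epsilon_Y) \mid X \in \cX_{a_0}, \Pi = \pi_j] = \E[g(X, \epsilon_Y) \mid X \in \cX_{a_0}],
\]
where the final expression is independent of $j$. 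The two conditional means therefore coincide, so observing them to differ contradicts \cref{asmp:A_control_value}, which is exactly the claim.

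The main (and essentially only) delicate point is the independence step: I must use that it is the \emph{joint} pair $(X, \epsilon_Y)$, not merely $X$, that is independent of $\Pi$, since $g$ depends on both arguments. This is precisely what the mutual independence of $\epsilon_D, \epsilon_X, \epsilon_Y$ in \cref{asmp:scm} guarantees, together with the fact that $\Pi$ and $M$ are deterministic downstream of $\epsilon_D$. A secondary bookkeeping point is that all conditional expectations are well defined: the conditioning on $X \in \cX_{a_0}$ is legitimate because $P(X \in \cX_{a_0}) > 0$ by hypothesis, and the conditioning on $\{\Pi = \pi_j\}$ is legitimate because each trialed model is deployed with positive probability in the RCT.
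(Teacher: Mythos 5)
Your proposal is correct and follows essentially the same route as the paper's proof: argue by contraposition, use the neutral-action assumption to show that on $\cX_{a_0}$ the realized outcome under either arm is the same $M$-free quantity, and then use randomization of $\Pi$ to conclude the two conditional means coincide. The only cosmetic difference is that you work directly with the exogenous noise $(X,\epsilon_Y)$ and its joint independence from $\Pi$, whereas the paper packages the same facts as consistency plus conditional ignorability applied pointwise in $x$ followed by an explicit aggregation over $P(x \mid X \in \cX_{a_0})$; both are valid and rest on the same structure of \cref{asmp:scm}.
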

Similar to~\cref{prop:falsify_monotonicity}, \cref{prop:falsify_neutral_actions} suggests a simple hypothesis test that can be used to falsify~\cref{asmp:A_control_value}: compare two empirical means in the data to test if outcomes under $\pi_1$ and $\pi_2$ are significantly different on the cases where they both choose $a_0$ as the model output. Of particular interest is the scenario where a control arm exists, and we are interested in checking whether outcomes under the control arm (e.g., not alerting in~\cref{exp:motivating_example}) coincide with outcomes in a treatment arm where the trialed model agrees with the control arm (e.g., does not raise alerts in~\cref{exp:motivating_example}).

\begin{example}[continues=exp:motivating_example]
    Consider an evaluation policy $\pi_u(x) = \1{r(x) > T^u}$ where the threshold for alerting $T^u > T^*$ is greater than the threshold used in the original trial and where the performance of $\pi_u(x)$ (e.g., the precision) is greater than that of the original trialed policy $\pi_1$. \cref{fig:illustration} gives a visual illustration of such a policy. In this scenario, under~\cref{asmp:more_reliable_better,asmp:A_control_value}, we can intuitively infer a lower bound on the policy value of $\pi_u$ using the outcomes of both (a) patients with $r(x) \leq T^u$ who did not receive alerts in the trial (either because they were in the control arm or because $\pi_1$ did not raise alerts), and (b) those patients with $r(x) > T^u$ who did receive alerts under $\pi_1$.
\end{example}

While these assumptions are sufficient in some scenarios, they do not yield meaningful bounds when a new policy takes actions (i.e., a new model produces outputs) on a given case that was never seen for similar cases in the RCT.
\begin{example}[continues=exp:motivating_example]
    Consider the evaluation policy $\pi_l(x) = \1{r(x) > T^l}$ where the threshold for alerting $T^l < T^*$ is \textbf{less} than the threshold used in the original trial. Regardless of the performance of $\pi_l$ in this scenario, even under~\cref{asmp:more_reliable_better,asmp:A_control_value}, we have no way to infer outcomes under $\pi_l$ for the individuals where $r(x) \in [T^l, T^*]$. These correspond to a set of \enquote{never alerted} individuals where $\pi_l$ raises an alert but where neither the control arm nor the trialed policy $\pi_1$ raised an alert.
\end{example}

To resolve this fundamental uncertainty, it is sufficient to know that outcomes $Y$ are bounded, such that we can provide some bounds on expected outcomes in the evaluation of policies that take never-before-seen actions.
\begin{assumption}[Bounded Outcomes]\label{asmp:bounded_Y}
    There exists constants $Y_{\text{min}}, Y_{\text{max}}$ such that $Y_{\text{min}} \leq Y \leq Y_{\text{max}}$.
\end{assumption}

\textbf{Aside: Why require the performance assumption?}
We pause to reflect on the importance of the assumption (implicit in~\cref{asmp:scm}) that implies that our choice of model $\Pi$ impacts outcomes, not only through the outputs $A$, but also through model performance $M$. Broadly speaking, this assumption is not only intuitive from a real-world perspective, but it also has the welcome side-effect of ruling out nonsensical conclusions about trial design. For instance, there are trivial ways to satisfy the condition that, for every value of $\cX$, there exists some model in the trial that matches the output of $\pi_e$.  In the context of~\cref{exp:motivating_example}, for instance, one could trial an alerting system that simply \textit{always raises alerts for every patient}, alongside a control arm that never raises alerts. Then, the requirement that we observe what happens to patients both under no alerts and under alerts for each $x \in \cX$ would be satisfied, eliminating any challenges related to coverage. The assumption that model accuracy $M$ impacts outcomes gives a formal rationale for why this type of trial design is nonsensical: The observed impact of this \enquote{always alert} policy would likely be minimal, or even harmful, compared to never raising alerts due to the negative impact of extremely poor accuracy.

We will shortly present our main result: Under our data-generating process (\cref{asmp:scm}) and the assumptions above (\cref{asmp:more_reliable_better,asmp:A_control_value,asmp:bounded_Y}), we can compute tight bounds on expected outcomes under any proposed model $\pi_e$. First, however, we will define some useful notation for conveying our results, which builds upon the intuition above.

\begin{restatable}[Policy/Model Sets]{definition}{PartitionsDef}\label{def:partitions}
    For each value of $x \in \cX$, we define the sets of trialed policies/models (possibly none) that agree with $\pi_e(x)$ and subsets of this set based on the performance characteristics of those trialed models\footnote{All these sets are defined with respect to the model $\pi_e$ and could be written more precisely with $\pi_e$ as an argument instead of in the superscript (e.g., $\mathbf{\Pi}(x, \pi_e)$) but we use the superscript notation for conciseness.}.
    \begin{align*}
        \Pe(x) &\coloneqq \{\pi \in \varPi \mid \pi(x) = \pi_e(x)\} \\
        \Pel(x) &\coloneqq \{\pi \in \varPi \mid \pi(x) = \pi_e(x), f_M(\pi) \leq f_M(\pi_e)\} \\
        \Peg(x) &\coloneqq \{\pi \in \varPi \mid \pi(x) = \pi_e(x), f_M(\pi) \geq f_M(\pi_e)\}
    \end{align*}
    We also further define subsets of $\Pel$ and $\Peg$ that contain only the next-worst or next-best performing model\footnote{Where relevant, we use the convention that $\argmin_{\pi \in \varnothing}(f_M(\pi)) = \varnothing$.}.
    \begin{align*}
        \Petl(x) &\coloneqq \argmax_{\pi \in \Pel(x)} f_M(\pi), \\
        \Petg(x) &\coloneqq \argmin_{\pi \in \Peg(x)} f_M(\pi)
    \end{align*}
\end{restatable}

\begin{remark}
    To summarize our notation related to deployed and new models, we use $\varPi$ to denote the space of all deployed models in the RCT, $\Pi$ to refer to the variable representing the deployed model, $\pi$ to denote a specific deployed model, $\pi_e$ to denote the new model we are evaluating, and $\Pi$ in boldface with the superscript $e$, such as $\Pe(x), \Pel(x), \Petl(x), \Peg(x),$ and $\Petg(x)$, to denote functions that take a value of $x \in \cX$ as input and return a set of models satisfying various criteria.
\end{remark}

Using~\cref{def:partitions}, we can give precise lower and upper bounds on the performance of a new model $\pi_e$. 

\begin{restatable}{theorem}{BoundsWithControl}\label{thm:bounds_with_control}\label{THM:BOUNDS_WITH_CONTROL}
    Given the data generating process in~\cref{asmp:scm}, and under~\cref{asmp:bounded_Y,asmp:more_reliable_better,asmp:A_control_value}, the policy value of a model / policy $\pi_e$ is bounded as
    \begin{equation}
       L(\pi_e) \leq \E[Y(A=\pi_e, M=f_M(\pi_e))] \leq U(\pi_e),
    \end{equation} 
    where 
    \begin{align}
        L(\pi_e) = \E\big[&\colora{\1{\pi_e \neq a_0}}\big( \nonumber \\
        &\colorb{\1{\Petl(X) \neq \varnothing}} \E[Y \mid X, \Pi \in \Petl(X)] \label{eq:lower1} \\
        +&\colorb{\1{\Petl(X) = \varnothing}} Y_{min} \big) \label{eq:lower2} \\
        +&\colora{\1{\pi_e = a_0}}\big( \nonumber \\
        &\colorb{\1{\Pe(X) \neq \varnothing}} \E[Y \mid X, \Pi \in \Pe(X)] \label{eq:lower3} \\
        +&\colorb{\1{\Pe(X) = \varnothing}} Y_{min}\big)\big] \label{eq:lower4} \\
        U(\pi_e) = \E\big[&\colora{\1{\pi_e \neq a_0}}\big( \nonumber \\
        &\colorb{\1{\Petg(X) \neq \varnothing}} \E[Y \mid X, \Pi \in \Petg(X)] \nonumber \\
        +&\colorb{\1{\Petg(X) = \varnothing}} Y_{max}\big) \nonumber \\
        +& \colora{\1{\pi_e = a_0}}\big( \nonumber \\
        &\colorb{\1{\Pe(X) \neq \varnothing}} \E[Y \mid X, \Pi \in \Pe(X)] \nonumber \\
        +& \colorb{\1{\Pe(X) = \varnothing}} Y_{max}\big)] \nonumber
    \end{align}
    These bounds are still valid if we replace $\Petl(X)$ with $\Pel(X)$ and $\Petg(X)$ with $\Peg(X)$.
\end{restatable}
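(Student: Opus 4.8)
The plan is to prove both inequalities pointwise in $X$ and then integrate, using the tower property $\E[Y(A=\pi_e, M=f_M(\pi_e))] = \E_X\big[\E[Y(A=\pi_e(X), M=f_M(\pi_e)) \mid X]\big]$. It therefore suffices to show, for (almost) every fixed $x$, that the conditional target $\mu(x) \coloneqq \E[Y(A=\pi_e(x), M=f_M(\pi_e)) \mid X=x]$ lies between the integrand of $L(\pi_e)$ and that of $U(\pi_e)$; the outer indicators $\1{\pi_e \neq a_0}$ and $\1{\pi_e = a_0}$ match automatically since they are deterministic functions of $x$.

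First I would establish the key consistency identity: for any trialed model $\pi \in \varPi$, $\E[Y \mid X=x, \Pi=\pi] = \E[Y(A=\pi(x), M=f_M(\pi)) \mid X=x]$. This follows directly from~\cref{asmp:scm}: conditioning on $\Pi=\pi$ and $X=x$ fixes $A=\pi(x)$ and $M=f_M(\pi)$ in $Y=f_Y(A,X,M,\epsilon_Y)$, and since $(X,\Pi)$ is a function of $(\epsilon_X,\epsilon_D)$ which is independent of $\epsilon_Y$, the conditioning leaves the law of $\epsilon_Y$ unchanged. The same independence gives $X \ci \Pi$, so each trialed policy has positive conditional probability and every conditional expectation on the relevant sets is well-defined. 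A useful consequence is that for any set $S$ of trialed models all agreeing with $\pi_e$ on $x$, $\E[Y \mid X=x, \Pi \in S]$ is a convex combination of the quantities $\E[Y(A=\pi_e(x), M=f_M(\pi)) \mid X=x]$ over $\pi \in S$.

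Next I would split into the neutral and non-neutral cases at $x$. If $\pi_e(x)=a_0$, then~\cref{asmp:A_control_value} gives $Y(A=a_0,M=m)=Y(A=a_0)$ for all $m$, so every $\pi \in \Pe(x)$ satisfies $\E[Y \mid X=x, \Pi=\pi] = \E[Y(A=a_0)\mid X=x] = \mu(x)$ exactly, and hence so does the convex combination $\E[Y \mid X=x, \Pi\in\Pe(x)]$; this identifies $\mu(x)$ on the nonempty branch, while the empty branch is covered by $Y_{\text{min}}\le\mu(x)\le Y_{\text{max}}$ from~\cref{asmp:bounded_Y}. If $\pi_e(x)\ne a_0$, I would invoke~\cref{asmp:more_reliable_better}: any $\pi\in\Pel(x)$ has $f_M(\pi)\le f_M(\pi_e)$, so $Y(A=\pi_e(x),M=f_M(\pi))\le Y(A=\pi_e(x),M=f_M(\pi_e))$ and thus $\E[Y \mid X=x,\Pi=\pi]\le\mu(x)$, a valid lower bound; symmetrically any $\pi\in\Peg(x)$ yields an upper bound. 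Choosing the highest-performing policy below $f_M(\pi_e)$, namely $\Petl(x)$, gives the tightest such lower bound (and $\Petg(x)$ the tightest upper bound), while an empty set forces the fallback $Y_{\text{min}}$ / $Y_{\text{max}}$. Integrating the pointwise inequalities over $X$ then delivers $L(\pi_e)\le\E[Y(\pi_e)]\le U(\pi_e)$.

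Finally, the closing claim that $\Petl,\Petg$ may be replaced by $\Pel,\Peg$ follows from the convex-combination observation: since each $\pi\in\Pel(x)$ individually satisfies $\E[Y\mid X=x,\Pi=\pi]\le\mu(x)$, any average of them (in particular $\E[Y\mid X=x,\Pi\in\Pel(x)]$) remains $\le\mu(x)$, and dually for $\Peg$. The main obstacle I anticipate is not a single deep step but the bookkeeping: checking that conditioning on a set of tied or agreeing policies is well-defined (policies in $\Petl(x)$ share both the action $\pi_e(x)$ and the performance level, so they induce the same conditional expectation) and that the four-way case split (neutral vs.\ non-neutral, empty vs.\ nonempty) exactly reproduces the indicator structure of $L$ and $U$. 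The one genuinely substantive move is the consistency identity, which must correctly exploit the deterministic relations $A=\Pi(X)$ and $M=f_M(\Pi)$ together with $\epsilon_Y\ci(X,\Pi)$.
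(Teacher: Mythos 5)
Your proposal is correct and follows essentially the same route as the paper's proof: condition on $X$, split into the four cases (neutral vs.\ non-neutral action, empty vs.\ non-empty comparison set), use consistency plus the ignorability $Y(a,m) \ci A, M \mid X$ to equate $\E[Y \mid X=x, \Pi=\pi]$ with the corresponding potential-outcome mean, and then apply \cref{asmp:more_reliable_better}, \cref{asmp:A_control_value}, and \cref{asmp:bounded_Y} term by term before integrating over $X$. The only differences are presentational — the paper packages the pointwise comparisons into \cref{lemma:bound_lemma,lemma:bounded_outcomes} and handles the neutral case via \cref{lemma:independence_under_neutral}, whereas you argue directly from the potential-outcome equality and a convex-combination observation (which also cleanly delivers the final claim about replacing $\Petl,\Petg$ with $\Pel,\Peg$).
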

We give intuition for the construction of the lower bound. First, note that each value of $x \in \cX$ makes a contribution to the construction of the lower bound based on which \colora{green} and which \colorb{purple} indicator it activates. In \cref{eq:lower1}, we consider model outputs that are not the neutral action $a_0$ and values of $X$ at which $\Petl(X)$ is non-empty. That is, there is at least one trialed model agreeing in output with $\pi_e$ that also has less than or equal performance. Here, we use outcome data from trial arms with the next-worst performance to infer outcomes. Next, \cref{eq:lower2} represents values of $X$ where $\pi_e$ does not output the neutral action and there are no agreeing trialed models with less than or equal performance. In this case, we use $Y_{min}$ to lower bound outcomes as we have no trial data on outcomes under such model output. In \cref{eq:lower3}, the new model outputs the neutral action $a_0$, and there is at least one trial model agreeing in output. We can then use data from any trial models that agreed in output to infer outcomes as model performance does not influence outcomes under the neutral action. Finally, \cref{eq:lower4} represents no trial models agreeing in an output of the neutral action; here, we lower bound by $Y_{min}$. The intuition for the upper bound follows similarly.

The lower and upper bounds in \cref{thm:bounds_with_control} can be constructed by iterating over all possible values of $X$, determining whether the model output is a neutral action, checking whether there are agreeing trial models with appropriate levels of performance, and taking the weighted average of the appropriate bounds given the observations above over $X$. In Appendix~\ref{app:algorithmic_bounds}, we give an algorithm for constructing the bounds proposed in~\cref{thm:bounds_with_control} in this manner.

\begin{restatable}[Tightness of bounds in~\cref{thm:bounds_with_control}]{theorem}{NonIdBounds}\label{prop:non-id}
    For any observational distribution $P(X, Y, A, M, \Pi, D)$ consistent with the assumptions of~\cref{thm:bounds_with_control}, there exist two structural causal models $\cM_L, \cM_U$ such that both are consistent with \cref{asmp:scm,asmp:more_reliable_better,asmp:A_control_value,asmp:bounded_Y}, both give rise to that same observational distribution $P$, and where the policy value of any policy $\pi_e$ under $\cM_L, \cM_U$ is given by $L(\pi_e), U(\pi_e)$ from~\cref{thm:bounds_with_control}, respectively. Hence, these bounds cannot be improved without further assumptions.
\end{restatable}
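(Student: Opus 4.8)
The plan is to prove tightness by explicitly constructing the two extremal structural causal models $\cM_L$ and $\cM_U$ directly from the given observational distribution $P$, modifying only the outcome mechanism $f_Y$ while leaving $f_D$, $f_X$, $\Pi = \pi_D$, $M = f_M(\Pi)$, and $A = \Pi(X)$ untouched (these are already pinned down by $P$ and by \cref{asmp:scm}). The first step is to record exactly what $P$ reveals about $f_Y$: because randomization gives $\epsilon_Y \perp (\Pi, X)$, conditioning on $X = x$ and $\Pi = \pi_i$ recovers precisely the law of the potential outcome $Y(A = \pi_i(x), M = f_M(\pi_i))$ at $X = x$. Hence, for each covariate value $x$ and each action $a$, the observed data pins down the conditional outcome law at exactly the finite set of performance levels $\{f_M(\pi_i) : \pi_i(x) = a\}$ attained by trialed models, and leaves $f_Y(a, x, m, \cdot)$ entirely free at all other $m$, subject only to \cref{asmp:more_reliable_better,asmp:A_control_value,asmp:bounded_Y}.

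The core of the argument is a coupling that simultaneously reproduces every observed conditional law, is pointwise monotone in $m$ as required by \cref{asmp:more_reliable_better}, and is as small as possible (for $\cM_L$) or as large as possible (for $\cM_U$) at every unobserved $m$. For fixed $(a,x)$ with $a \neq a_0$, let $m_1 < \cdots < m_k$ be the observed performance levels and let $F_{a,x,m_j}$ denote the corresponding observed outcome CDFs. Monotonicity of the true potential outcomes forces these marginals to be stochastically ordered, so I will use the comonotone (shared-uniform) coupling $F_{a,x,m_j}^{-1}(U)$, $U \sim \mathrm{Unif}[0,1]$, which is nondecreasing in $j$ for each fixed $U$. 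For $\cM_L$ I then define $f_Y(a,x,m,U)$ as the right-continuous step function equal to $Y_{\min}$ for $m < m_1$ and equal to $F_{a,x,m_j}^{-1}(U)$ on $[m_j, m_{j+1})$ (constant at $F_{a,x,m_k}^{-1}(U)$ above $m_k$); for $\cM_U$ I instead pad with $Y_{\max}$ above $m_k$ and make the function left-continuous, equal to $F_{a,x,m_j}^{-1}(U)$ on $(m_{j-1}, m_j]$. For the neutral action $a = a_0$, \cref{asmp:A_control_value} forces all observed laws at $a_0$ to be identical, so I set $f_Y(a_0, x, m, U)$ to that common law (independent of $m$) when some trialed model takes $a_0$ at $x$, and to the point mass $Y_{\min}$ (resp.\ $Y_{\max}$) otherwise.

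It then remains to verify the three claims. Matching $P$ is immediate, since $f_Y$ agrees with the observed law at every observed $(a,x,m)$; the assumptions hold by construction (monotonicity from the step structure and stochastic ordering, neutrality from $m$-independence at $a_0$, boundedness since all quantiles and pads lie in $[Y_{\min}, Y_{\max}]$). Evaluating the policy value of an arbitrary $\pi_e$ under $\cM_L$ reproduces $L(\pi_e)$ case by case: when $\pi_e(x) \neq a_0$, querying performance $f_M(\pi_e)$ lands on the largest observed level not exceeding it, i.e.\ the next-worst model $\Petl(x)$, yielding $\E[Y \mid X, \Pi \in \Petl(X)]$ when $\Petl(x) \neq \varnothing$ (\cref{eq:lower1}) and $Y_{\min}$ otherwise (\cref{eq:lower2}); the neutral-action cases (\cref{eq:lower3,eq:lower4}) follow identically, and $\cM_U$ is symmetric. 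Combining these with the validity direction already established in \cref{thm:bounds_with_control} shows that the infimum and supremum of the policy value over all consistent SCMs equal $L(\pi_e)$ and $U(\pi_e)$, so no bound can be improved.

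I expect the main obstacle to be the coupling step. The observational data reveals only the one-dimensional marginal of $Y$ at each attained performance level and never the joint law across levels, so I must manufacture a joint outcome mechanism that respects pointwise monotonicity in $m$ while hitting each marginal --- and, crucially, it must do so with a single $\cM_L$ (resp.\ $\cM_U$) that is extremal for every $\pi_e$ simultaneously, rather than a construction tailored to one target policy. Verifying that the comonotone coupling stays pointwise monotone across the inserted $Y_{\min}$/$Y_{\max}$ pads, and across ties in performance (where the sufficiency of $(A,M)$ in \cref{asmp:scm} guarantees identical outcome laws for distinct policies attaining the same $(a,m)$), is the delicate bookkeeping, but it becomes routine once the stochastic ordering of the observed marginals is in hand.
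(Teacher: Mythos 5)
Your construction is essentially the paper's own proof: you keep all structural equations except $f_Y$, use the shared-uniform (inverse-CDF) coupling $F^{-1}_{Y\mid x,a,m}(\epsilon_Y)$ at observed performance levels, pad with $Y_{\min}$/$Y_{\max}$ outside the observed range, carry the next-worst (resp.\ next-best) observed level forward as a piecewise-constant extension for $\cM_L$ (resp.\ $\cM_U$), handle $a_0$ by $m$-invariance, and then evaluate the policy value case by case against the partition defining $L(\pi_e)$ and $U(\pi_e)$. This matches the paper's argument step for step (your explicit remark that the observed marginals must already be stochastically ordered is a point the paper leaves implicit), so the proposal is correct and takes the same route.
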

Note that~\cref{prop:non-id} implies that the tightest possible bounds require the use of $\Petl(x)$ and $\Petg(x)$ in place of $\Pel(x)$ and $\Peg(x)$, respectively. This requirement arises because we may get tighter lower-bounds (and similarly, tighter upper-bounds) by using only outcomes under the \enquote{next-worst/best} performing model, rather than averaging over outcomes under all worse/better-performing models. Nonetheless, it may be useful to use $\Pel(x)$ instead of $\Petl(x)$ in some scenarios due to sample-size concerns, especially if outcomes $Y$ do not vary significantly with $M$.

\textbf{When is exact identification possible?} To better understand conditions for agreement of upper and lower bounds, we directly consider the width of these bounds.
\begin{restatable}[Bound Decomposition]{proposition}{BoundDecomposition}\label{prop:bound_decomposition}
    The gap between the bounds in~\cref{thm:bounds_with_control} can be written as 
    \begin{align*}
        U(\pi_e) - L(\pi_e) &= \E[\delta(X, Y, \Pi)]
    \end{align*}
    where 
    \begin{align}
        &\delta(X, Y, \Pi) = \nonumber \\
        &\colorc{\1{\Pe(X) = \varnothing}}(Y_{\text{max}} - Y_{\text{min}}) \label{eq:diff_1} \\
        +&\colora{\1{\pi_e \neq a_0}} \big[ \nonumber \\
        & \colorb{\1{\Petl(X) \neq \varnothing, \Petg(X) \neq \varnothing}} \cdot\label{eq:diff_2} \\ 
        &(\E[Y \mid X, \Pi \in \Petg(X)] - \E[Y \mid X, \Pi \in \Petl(X)]) \nonumber \\ 
        +& \colorb{\1{\Petl(X) \neq \varnothing, \Petg(X) = \varnothing}} \cdot \label{eq:diff_3} \\
        & (Y_{\text{max}} - \E[Y \mid X, \Pi \in \Petl(X)]) \nonumber \\
        +& \colorb{\1{\Petl(X) = \varnothing, \Petg(X) \neq \varnothing}} \cdot \label{eq:diff_4} \\
        & (\E[Y \mid X, \Pi \in \Petg(X)] - Y_{\text{min}}) \big]  \nonumber 
    \end{align}
    Moreover, $\delta(X, Y, \Pi) \geq 0$ almost surely under the assumptions of~\cref{thm:bounds_with_control}.
\end{restatable}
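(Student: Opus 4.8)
The plan is to prove the decomposition by reducing to a pointwise (in $X$) computation, and then to establish nonnegativity term by term.

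First I would observe that both $L(\pi_e)$ and $U(\pi_e)$ in \cref{thm:bounds_with_control} are expectations of integrands that are functions of $X$ alone: each conditional mean $\E[Y \mid X, \Pi \in \cdot]$ and each indicator $\1{\cdot}$ depends only on $X$, the fixed model set $\varPi$, and $\pi_e$. Writing $L(\pi_e) = \E[g_L(X)]$ and $U(\pi_e) = \E[g_U(X)]$, linearity of expectation gives $U(\pi_e) - L(\pi_e) = \E[g_U(X) - g_L(X)]$, so it suffices to show $g_U(x) - g_L(x)$ equals the claimed $\delta$ for each fixed $x$.

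For the pointwise identity I would split on whether $\pi_e(x) = a_0$. When $\pi_e(x) = a_0$, the $\1{\pi_e \neq a_0}$ branches vanish and the two surviving terms share the common summand $\1{\Pe(x) \neq \varnothing}\,\E[Y \mid X, \Pi \in \Pe(X)]$, which cancels, leaving $\1{\Pe(x) = \varnothing}(Y_{\text{max}} - Y_{\text{min}})$. When $\pi_e(x) \neq a_0$, only the $\Petl/\Petg$ branches survive, and I would enumerate the four sub-cases according to emptiness of $\Petl(x)$ and $\Petg(x)$, reading off \cref{eq:diff_2,eq:diff_3,eq:diff_4} in the three cases with at least one set nonempty and $Y_{\text{max}} - Y_{\text{min}}$ in the doubly-empty case. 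The crucial bookkeeping step is the identity $\{\Pe(x) = \varnothing\} = \{\Petl(x) = \varnothing\} \cap \{\Petg(x) = \varnothing\}$, which holds because $\Pe = \Pel \cup \Peg$ (every performance value is either $\leq$ or $\geq f_M(\pi_e)$) and $\Petl, \Petg$ are nonempty exactly when $\Pel, \Peg$ are. This lets me merge the doubly-empty sub-case of the $\pi_e(x) \neq a_0$ branch with the $\Pe(x) = \varnothing$ sub-case of the $\pi_e(x) = a_0$ branch into the single, $a_0$-free term \cref{eq:diff_1}. Collecting the pieces recovers the stated $\delta$.

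For nonnegativity I would treat the four terms separately. The terms \cref{eq:diff_1,eq:diff_3,eq:diff_4} are nonnegative directly from \cref{asmp:bounded_Y}, since any conditional mean of $Y$ lies in $[Y_{\text{min}}, Y_{\text{max}}]$. The only substantive term is \cref{eq:diff_2}, where I must show $\E[Y \mid X, \Pi \in \Petg(X)] \geq \E[Y \mid X, \Pi \in \Petl(X)]$. Here I would first establish a counterfactual representation: under \cref{asmp:scm}, every model $\pi \in \Petl(x)$ outputs the same action $a = \pi_e(x)$ at $x$ and shares the common performance value $m_l \coloneqq \max_{\pi \in \Pel(x)} f_M(\pi)$, so conditioning on $\{X = x,\ \Pi \in \Petl(x)\}$ fixes $A = a$ and $M = m_l$; because $\epsilon_Y$ is independent of $(\Pi, X)$, this yields $\E[Y \mid X = x, \Pi \in \Petl(x)] = \E[Y(A = a, M = m_l) \mid X = x]$, and analogously $\E[Y \mid X = x, \Pi \in \Petg(x)] = \E[Y(A = a, M = m_g) \mid X = x]$ with $m_g \coloneqq \min_{\pi \in \Peg(x)} f_M(\pi)$. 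Since $m_l \leq f_M(\pi_e) \leq m_g$, \cref{asmp:more_reliable_better} gives $Y(A = a, M = m_l) \leq Y(A = a, M = m_g)$ pointwise, and taking conditional expectations preserves the inequality.

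The main obstacle I anticipate is precisely this counterfactual representation underlying \cref{eq:diff_2}: I must verify that conditioning on membership in a \emph{set} of policies, rather than a single policy, does not disturb the argument, i.e., that all policies in $\Petl(x)$ (resp. $\Petg(x)$) induce the identical outcome law at $x$ because they agree both in action and in performance there, so the set-conditional mean collapses to a single counterfactual mean to which monotonicity can be applied. Everything else is routine case bookkeeping together with the boundedness of $Y$.
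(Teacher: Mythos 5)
Your proposal is correct and follows essentially the same route as the paper's proof: both cancel the shared $\1{\Pe(X) \neq \varnothing, \pi_e = a_0}$ term, both rely on the key set identity $\{\Pe(x) = \varnothing\} = \{\Petl(x) = \varnothing\} \cap \{\Petg(x) = \varnothing\}$ to merge the doubly-empty cases into \cref{eq:diff_1}, and both reduce nonnegativity of \cref{eq:diff_2} to performance monotonicity via the counterfactual representation of the set-conditional means (which the paper delegates to its~\cref{lemma:bound_lemma}). Your explicit pointwise case analysis is just a reorganization of the paper's indicator-splitting algebra.
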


Note that the only way to achieve point identification (a gap of zero in~\cref{prop:bound_decomposition}) is if each component (\cref{eq:diff_1,eq:diff_2,eq:diff_3,eq:diff_4}) is equal to zero. Here, a value of $x \in \cX$ only contributes to the bound decomposition if $x$ satisfies a \colorc{blue} indicator or both a \colora{green} and \colorb{purple} indicator. \cref{eq:diff_1} captures uncertainty that arises in the subset of the population, indexed by $x \in \cX$, for which $\{\Pe(x) = \varnothing\}$ where \textit{no trialed model agrees with the output of the model $\pi_e$} regardless of whether the new model outputs $a_0$. For this term to be zero, there must exist some trialed model that agrees with the action taken by $\pi_e$ for every value of $x \in \cX$.  Second,~\cref{eq:diff_2} reflects differences in the bounds for the subpopulation where the evaluation policy $\pi_e$ takes a non-neutral action ($\pi_e(X) \neq a_0$), and the trialed models that agree with $\pi_e$ have at least one of equal performance\footnote{When there exists an agreeing trial model with equal performance, both $\{\Petl(X) \neq \varnothing\}$ and $\{\Petg(X) \neq \varnothing\}$ are true according to their definitions.} or include both better-performing and worse-performing models. Here, the outcomes under the better/worse-performing models give an upper/lower bound on the outcomes under $\pi_e$. For this term to be zero, either there exists an agreeing trial model with equal performance, this subpopulation $\{\Petl(x) \neq \varnothing, \Petg(x) \neq \varnothing\}$ is empty, or the outcomes under the better and worse-performing models coincide\footnote{Equivalence of conditional outcomes could occur if differences in performance do not impact outcomes for the range of performances tested. Note that~\cref{asmp:more_reliable_better} allows for this occurrence, as it does not assume a strict inequality.}. \Cref{eq:diff_3,eq:diff_4} capture subpopulations where the only models that agree with $\pi_e$ are either worse-performing (giving a lower bound, but no meaningful upper bound) or better-performing (giving an upper bound, but no meaningful lower bound) when the new model does not output $a_0$, and these sets must be empty for these terms to be zero. 

\textbf{How can we estimate these bounds from data?} \Cref{cor:empirical_estimation} below implies a simple estimator that can be used to estimate bounds (and provide asymptotically valid confidence intervals) on the policy value for a new policy $\pi_e$ without the need for training auxiliary models.

\begin{restatable}{proposition}{EmpiricalEstimation}\label{cor:empirical_estimation}
    The bounds in \cref{thm:bounds_with_control} can be written $L(\pi_e) = \E[\psi_L(Y, X, \Pi)]$ and $U(\pi_e) = \E[\psi_U(Y, X, \Pi)]$, where $\psi_L$ and $\psi_U$ are defined as follows
    \begin{align*}
        &\psi_L(Y, X, \Pi) \\
        &\coloneqq \begin{cases}
            Y \cdot \frac{\1{\Pi \in \Petl(X)}}{P(\Pi \in \Petl(X))}, &\ \text{if } \colorb{\Petl(X) \neq \varnothing}, \colora{\pi_e(X) \neq a_0} \\
            Y_{min}, &\ \text{if } \colorb{\Petl(X) = \varnothing}, \colora{\pi_e(X) \neq a_0} \\
            Y \cdot \frac{\1{\Pi \in \Pe(X)}}{P(\Pi \in \Pe(X))}, &\ \text{if } \colorb{\Pe(X) \neq \varnothing}, \colora{\pi_e(X) = a_0} \\
            Y_{min}, &\ \text{if } \colorb{\Pe(X) = \varnothing}, \colora{\pi_e(X) = a_0} \\
        \end{cases} \\
        &\psi_U(Y, X, \Pi) \\
        &\coloneqq \begin{cases}
            Y \cdot \frac{\1{\Pi \in \Petg(X)}}{P(\Pi \in \Petg(X))}, &\ \text{if } \colorb{\Petg(X) \neq \varnothing}, \colora{\pi_e(X) \neq a_0} \\
            Y_{max}, &\ \text{if } \colorb{\Petg(X) = \varnothing}, \colora{\pi_e(X) \neq a_0} \\
            Y \cdot \frac{\1{\Pi \in \Pe(X)}}{P(\Pi \in \Pe(X))}, &\ \text{if } \colorb{\Pe(X) \neq \varnothing}, \colora{\pi_e(X) = a_0} \\
            Y_{max}, &\ \text{if } \colorb{\Pe(X) = \varnothing}, \colora{\pi_e(X) = a_0} \\
        \end{cases}
    \end{align*}
    Moreover, since $\psi_L, \psi_U$ are known functions of the data, these bounds can be estimated as 
    \begin{align*}
    \hat{L}(\pi_e) &\coloneqq n^{-1} \sum_i \psi_L(Y_i, X_i, \Pi_i)\\
    \hat{U}(\pi_e) &\coloneqq n^{-1} \sum_i \psi_U(Y_i, X_i, \Pi_i) 
    \end{align*}
    where $\sqrt{n} (L - \hat{L}) \stackrel{d}{\rightarrow} N(0, \sigma^2(\psi_L))$ where $\sigma^2(\psi_L)$ is the variance of $\psi_L$ and $\stackrel{d}{\rightarrow}$ denotes convergence in distribution, with similar convergence of $\hat{U}$, and hence
    \begin{align*}
        \bigg[&\hat{L}(\pi_e) - \Phi^{-1}\left(1 - \frac{\alpha}{2}\right) \cdot \frac{\hat{\sigma}(\psi_L)}{\sqrt{n}}, \\
        &\hat{U}(\pi_e) + \Phi^{-1}\left(1 - \frac{\alpha}{2}\right)\cdot \frac{\hat{\sigma}(\psi_U)}{ \sqrt{n}}\bigg]
    \end{align*} is an asymptotically valid ($1-\alpha$)-confidence interval, where $\hat{\sigma}(\psi)$ is the empirical standard deviation of $\psi$ and $\Phi^{-1}$ is the inverse of the standard normal CDF.
\end{restatable}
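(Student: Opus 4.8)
The plan is to establish the three claims in sequence: the representation $L(\pi_e)=\E[\psi_L(Y,X,\Pi)]$ and $U(\pi_e)=\E[\psi_U(Y,X,\Pi)]$, the central limit theorem for the plug-in estimators, and the asymptotic validity of the stated interval. For the representation, the engine is a simple inverse-probability-weighting identity. For any fixed non-empty set of trialed models $S\subseteq\varPi$, I would use the independence $\Pi\ci X$ — which follows from~\cref{asmp:scm}, since $\Pi=\pi_D$ with $D=f_D(\epsilon_D)$ and $\epsilon_D\ci\epsilon_X$ — to obtain $P(\Pi\in S\mid X)=P(\Pi\in S)$, and hence, by Bayes' rule,
\[
\E\!\left[ Y\cdot\frac{\1{\Pi\in S}}{P(\Pi\in S)}\,\Big|\,X\right]=\frac{\E[Y\,\1{\Pi\in S}\mid X]}{P(\Pi\in S\mid X)}=\E[Y\mid X,\Pi\in S].
\]
Since the case-selection events defining $\psi_L,\psi_U$ (whether $\pi_e(X)=a_0$, and whether $\Petl(X),\Petg(X),\Pe(X)$ are empty) are all measurable with respect to $X$ alone, I would condition on $X$, apply the identity pointwise with $S=\Petl(X)$, $\Petg(X)$, or $\Pe(X)$ as dictated by the active case, assign the constants $Y_{\min},Y_{\max}$ on the empty-set events, and take the outer expectation over $X$ by the tower property. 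This reproduces the formulas for $L(\pi_e),U(\pi_e)$ in~\cref{thm:bounds_with_control} term by term. The point needing care is positivity: $P(\Pi\in\Petl(X))>0$ on $\{\Petl(X)\neq\varnothing\}$, which holds because any non-empty subset of the finitely many trialed arms in $\varPi$ carries positive assignment probability, so the weights are well defined and — being known randomization probabilities — make $\psi_L,\psi_U$ \emph{known} functions of the data.

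For the limit theorem, I would note that the samples $(Y_i,X_i,\Pi_i)$ are i.i.d.\ draws from the SCM and that $\hat L,\hat U$ are therefore ordinary sample means of i.i.d.\ terms. Because $Y$ is bounded (\cref{asmp:bounded_Y}) and the weights $1/P(\Pi\in\Petl(X))$ are bounded above (finitely many arms, each with assignment probability bounded below by some $p_{\min}>0$), $\psi_L$ and $\psi_U$ are bounded and thus have finite variance. The Lindeberg--Lévy CLT then gives $\sqrt{n}(\hat L-L)\stackrel{d}{\to}N(0,\sigma^2(\psi_L))$ and analogously for $\hat U$; boundedness also makes the empirical standard deviation $\hat\sigma(\psi_L)$ consistent for $\sigma(\psi_L)$, so Slutsky's theorem yields $\sqrt{n}(\hat L-L)/\hat\sigma(\psi_L)\stackrel{d}{\to}N(0,1)$, and likewise for $\hat U$.

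For coverage, let $\theta=\E[Y(\pi_e)]$ and recall $L\le\theta\le U$ from~\cref{thm:bounds_with_control}. Writing $c_L=\Phi^{-1}(1-\tfrac{\alpha}{2})\hat\sigma(\psi_L)/\sqrt{n}$ and $c_U=\Phi^{-1}(1-\tfrac{\alpha}{2})\hat\sigma(\psi_U)/\sqrt{n}$, the interval misses $\theta$ only if $\theta<\hat L-c_L$ or $\theta>\hat U+c_U$. The first event, combined with $\theta\ge L$, forces $\hat L-L>c_L$, an event whose probability tends to $1-\Phi(\Phi^{-1}(1-\tfrac{\alpha}{2}))=\alpha/2$ by the standardized CLT above; symmetrically the second event forces $U-\hat U>c_U$, again of asymptotic probability $\alpha/2$. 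A union bound gives $\limsup_n P(\theta\notin\mathrm{CI})\le\alpha$, i.e.\ $\liminf_n P(\theta\in\mathrm{CI})\ge1-\alpha$.

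I expect the main obstacle to be the representation step: correctly justifying the reweighting when the conditioning set $\Petl(X)$ (or $\Petg(X)$, $\Pe(X)$) is itself a function of $X$, and verifying the positivity that makes the weights well defined and matches the theorem's piecewise formula case by case. The remaining steps are routine applications of the CLT, Slutsky's theorem, and a Bonferroni-style union bound, modulo the mild non-degeneracy condition $\sigma(\psi_L),\sigma(\psi_U)>0$ (the degenerate case being immediate, since then $\hat L=L$ almost surely and the corresponding endpoint analysis is trivial).
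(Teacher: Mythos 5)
Your proposal is correct and follows essentially the same route as the paper: the representation step is exactly the paper's helper lemma (an IPW identity justified by $\Pi \ci X$ from the randomized design, applied case-by-case over the $X$-measurable partition, with positivity handled by noting that any non-empty subset of trialed arms has positive assignment probability), and the CLT plus union-bound coverage argument matches the paper's concluding steps. Your treatment of the coverage argument and the Slutsky step is somewhat more explicit than the paper's, but the substance is identical.
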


\Cref{cor:empirical_estimation} gives a straightforward way of estimating bounds from data as empirical means over the RCT dataset. To give intuition, terms like $P(\Pi \in \Peg(X))$ are known by design (though they depend on $X$), since $P(\Pi)$ is assumed to be known and the trialed policies $\varPi$ are known, and so asymptotic normality is straightforward to demonstrate\footnote{The statistical efficiency of these bounds could potentially be improved by using a doubly-robust-style estimator that incorporates an estimate of terms, such as $\mu_{=}(X) \coloneqq E[Y \mid X = x, \Pi \in \Petl(X)]$, but we present this simpler estimator for ease of exposition and understanding.}. In \cref{sec:simulation}, we use \cref{cor:empirical_estimation} to estimate the bounds for the effect of deploying a new model.

\section{Recommendations For Pre-trial Design}
\label{sec:implications}

\textbf{Recommendation: Conduct trials with multiple models.} Our results suggest the utility of trialing multiple models in a cluster RCT that vary in their outputs on different patient populations and which exhibit a range of reasonable performance characteristics. First of all, doing so gives more flexibility to estimate the effect of new models if there are sizeable populations where trialed models raise different outputs. Second, falsification of our main assumptions (\cref{asmp:more_reliable_better,asmp:A_control_value}) can be done using data from patient populations where the outputs of different models agree. 

{\bf Recommendation: Use previous trial data to inform deployment of new models.} Given a set of models or policies included in a trial, it may be tempting to conclude that an updated model that is more accurate (on average) should be deployed. However, this conclusion may be flawed given the goal of improving patient outcomes. For instance, a model that is more accurate, but achieves higher accuracy by sacrificing performance on some important subpopulation, may ultimately lead to worse outcomes. Indeed, in~\cref{sec:simulation}, we give a simple simulated example where the optimal model is not the model with the best performance. Given the potential impacts of deploying ML models in practice, it is paramount that we use previous trial data to draw inference on outcomes of interest prior to deploying a new model.

\section{Simulation Study}
\label{sec:simulation}

We now describe a simple simulated example, inspired by~\cref{exp:motivating_example}, that demonstrates the results derived in \cref{sec:identification} and how our proposed method allows for more robust comparisons between models. We consider machine learning models that alert clinicians to the near-term onset of some disease, denoted by $O \in \{0, 1\}$. We simulate a cluster RCT with three arms: A control arm, denoted as a policy $\pi_0$ that never alerts, and two arms where models $\pi_1$ and $\pi_2$ are deployed, respectively. For simplicity, we consider model performance $M$ to be the ground-truth accuracy in predicting disease onset. Our outcome of interest, denoted by $Y \in \{0, 1\}$, is patient survival, and $X$ represents a baseline health characteristic. For the sake of an interpretable simulation, $X$ takes on values uniformly in $\{0, 1, 2, 3\}$, and $O$ and $Y$ are Bernoulli random variables. The probability of $O=1$ depends only on $X$ whereas the probability of $Y=1$ depends on $X$, $A$, and $M$. The trialed models are defined as $\pi_1 = \1{X = 1}$ and $\pi_2 = \1{X \in \{2, 3\}}$. 

\begin{figure}[t]
    \includegraphics[scale=0.46]{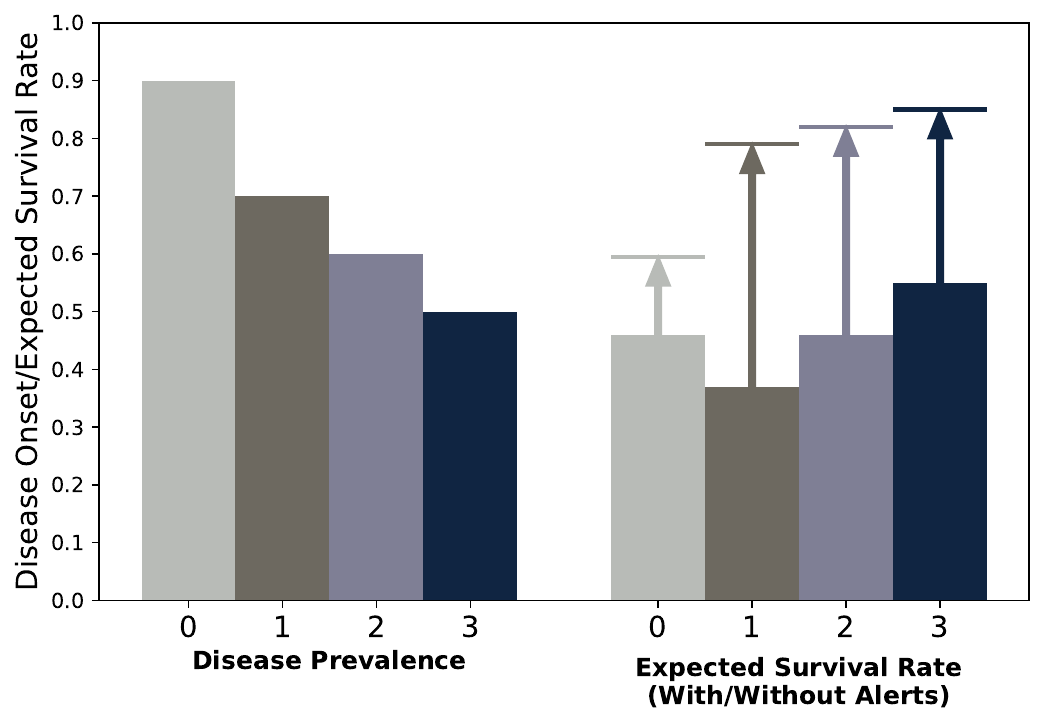}
    \caption{Bar graph giving a visual representation of the data generating process in the simulation study. The bars on the left depict the likelihood of disease onset for varying levels of baseline health $X$, and the bars on the right depict the likelihood of survival, assuming model performance is fixed at $m=0.5$. The arrows above the bars on the right show how survival rates change for each level of $X$ when patients receive an alert from an alerting model.}
    \label{fig:dgp_plot}
\end{figure}
\begin{table}[t]
    \centering
    \caption{Probability of developing disease ($\E[O]$) and expected survival rate ($\E[Y(a,m)]$) for the simulation study.}
    \label{tab:dgp_O_Y}
    \scalebox{0.75}{
    \begin{tabular}{ccc}
         & \multicolumn{1}{|c}{$X=0$} & \multicolumn{1}{|c}{$X=1$} \\
         \hline
         $\E[O]$ & \multicolumn{1}{|c}{$0.9$} & \multicolumn{1}{|c}{$0.7$} \\
         $\E[Y(a, m)]$ & \multicolumn{1}{|c}{$0.46+((1 + m)/2) \cdot 0.18 a$} & \multicolumn{1}{|c}{$0.37+((1 + m )/2) \cdot 0.56 a$} \\
         \\
         & \multicolumn{1}{|c}{$X=2$} & \multicolumn{1}{|c}{$X=3$} \\
         \hline
         $\E[O]$ & \multicolumn{1}{|c}{$0.6$} & \multicolumn{1}{|c}{$0.5$} \\
         $\E[Y(a, m)]$ & \multicolumn{1}{|c}{$0.46+((1 + m)/2) \cdot 0.48 a$} & \multicolumn{1}{|c}{$0.55+((1 + m )/2) \cdot 0.4 a$} \\
    \end{tabular}
    }
\end{table}

\Cref{tab:dgp_O_Y} gives expected values of $O$ and $Y(a, m)$ over all possible values of $X$, satisfying \cref{asmp:scm,asmp:A_control_value,asmp:bounded_Y}; \cref{fig:dgp_plot} visualizes the same information for fixed model performance. The aggregate patterns reflect the following process: First, we assume that all patients who do not develop disease will survive. Then, among the sickest individuals ($X = 0$), the survival rate among patients with disease is 40\% without alerts\footnote{To align this explanation with \cref{tab:dgp_O_Y}, note that all of the 10\% without disease and $40\%$ of the $90\%$ with disease survive, yielding the overall survival rate of 46\% without alerts.} and between 50-60\% with alerts, as influenced by $m$, where 50\% corresponds to a model with zero accuracy, and 60\% is achieved by a model with perfect accuracy. These patterns reflect the intuition that survival is improved (when alerts are raised) under a model perceived to be more accurate. However, the survival rate (among those with disease) is much improved for other groups, going from 10\% without alerts to 50-90\% with alerts\footnote{For explicit computations of survival rates with and without alerts, see~\cref{app:dgp}.}. Since the actual prevalence of disease varies across these groups, and since alerts only help those with disease, the overall causal effect is largest among those where $X = 1$. \Cref{app:dgp} describes and explains this data generating process in detail.
\begin{figure}[t]
    \includegraphics[scale=0.46]{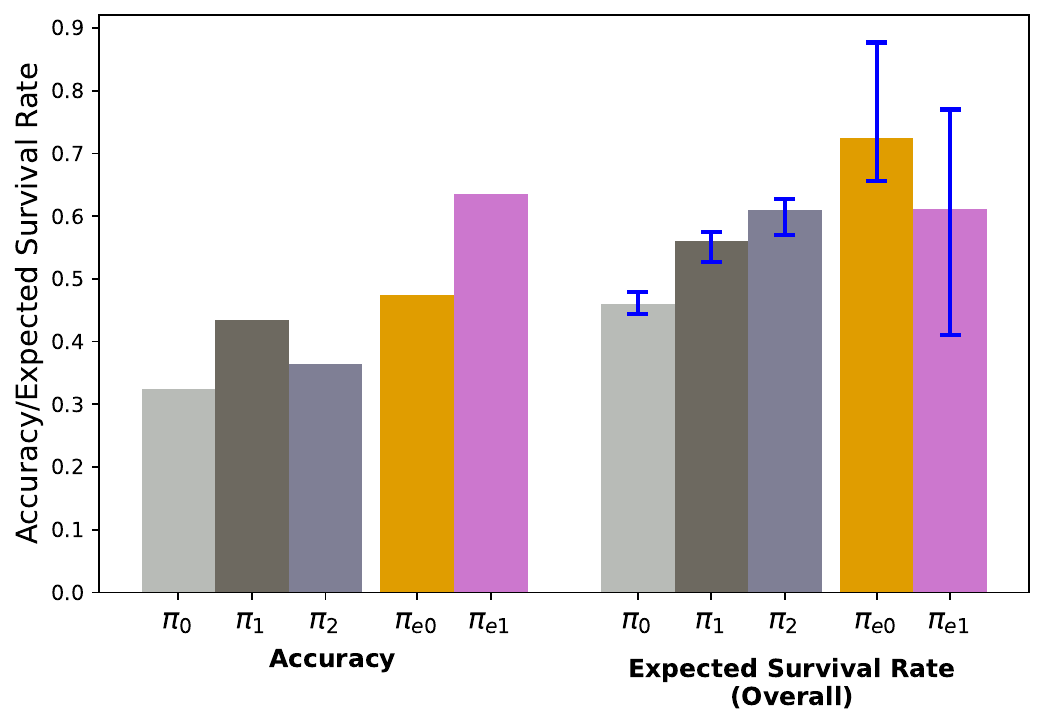}
    \caption{Bar graph showing accuracy at predicting disease and expected patient survival rates of different simulated alerting models. Bars denote ground-truth values, and blue intervals denote estimated bounds using our approach.  Because $\pi_{e0}$ and $\pi_{e1}$ were not trialed in the RCT, their bounds are wider. Despite $\pi_{e1}$ having the greatest raw accuracy, $\pi_{e0}$ has the most positive effect on patient survival rates.}
    \label{fig:sim}
\end{figure}

We consider the effect of deploying two new models, $\pi_{e0} = \1{X \in \{1, 2, 3\}}$ and $\pi_{e1} = \1{X \in \{0, 1\}}$, whose accuracies at predicting onset are computable using~\cref{tab:dgp_O_Y}. We then estimate bounds on $\E[Y(A=\pi, M=f_M(\pi))]$ for each model (including confidence intervals to incorporate finite-sample uncertainty, as described under~\cref{cor:empirical_estimation}) using data from a simulation with $n=5,000$. \cref{fig:sim} shows the simulation results and illustrates that \textbf{the most accurate model is not always the best}: $\pi_{e1}$ has the greatest accuracy in predicting the onset of disease, but $\pi_{e0}$ has the largest causal impact on patient outcomes.  This reversal occurs since $\pi_{e0}$ raises alerts for patients who stand to benefit the most, whereas $\pi_{e1}$ tends to alert for patients who have little to gain from them.  Moreover, our bounds reflect greater confidence in the (positive) impact of $\pi_{e0}$, since the lower bound for patient outcomes under $\pi_{e0}$ is greater than patient outcomes under all trialed models. Python code implementing this simulation study, implementing the estimation procedure proposed in \cref{cor:empirical_estimation}, and for generating \cref{fig:dgp_plot,fig:sim} are publicly available online at: \url{https://github.com/jacobmchen/just_trial_once}.

\section{Conclusion and Limitations}
\label{sec:conclusion}

In this paper, we discussed methods for estimating the causal impact of new or updated ML and AI models not previously trialed in an RCT. Under the important considerations that ML predictions are deterministic and that clinician trust in ML models play a role in determining their impacts on patient outcomes, we demonstrated how one could estimate lower and upper bounds on the effect of deploying a new model. We further proved tightness of our proposed bounds (\cref{prop:non-id}) and gave inverse probability weighted-style estimators for them (\cref{cor:empirical_estimation}). Given the possibility that key assumptions for employing our methods may not be fulfilled, we also proposed simple strategies for testing and falsifying them. Finally, we concluded with a simulation study to illustrate the application of our method and to highlight its benefits when selecting among model updates. 

However, our work is not without limitations: First, our derived bounds are naturally pessimistic, and, while they cannot be tightened without further assumptions (\cref{prop:non-id}), some additional assumptions may be warranted in certain cases. For instance, we implicitly assume that \enquote{anything can happen} when a model raises an alert on patients who never received alerts in the past.  One could instead assume that alerts are not harmful (except for their impact on performance), or that they are not harmful for a specific patient if the alert is correct. Second, we assume that model performance can be summarized in a single real number, but a more complex representation (e.g., involving subgroup-specific performance, performance adaptation over time, and clinician experience) may be warranted in some applications. Finally, a core limitation of our approach is that we still require an RCT. Using RCT data comes with many benefits:  It allows for greater confidence that core assumptions (e.g., randomization of policies) hold by design, and even allows for checking (in some cases) the core assumptions we make in this paper, as we have shown (\cref{prop:falsify_monotonicity,prop:falsify_neutral_actions}).  However, observational studies (e.g., pre-post studies of model deployments) are often easier to run in practice.  Our hope is that this work can serve as a springboard towards increasing the utility, and thus adoption, of RCTs for ML models deployed in high-risk settings.


\newpage

\begin{acknowledgements}
    The authors are grateful for the helpful feedback of anonymous reviewers as well as fruitful discussions with Jonathan Zhang, Erik Skalnes, and Trung Phung.
\end{acknowledgements}

\bibliography{uai2025-template}

\newpage

\onecolumn

\appendix
\title{Supplementary Material}
\maketitle

\section{Brief Causal Inference Overview}
\label[secinapp]{app:overview}

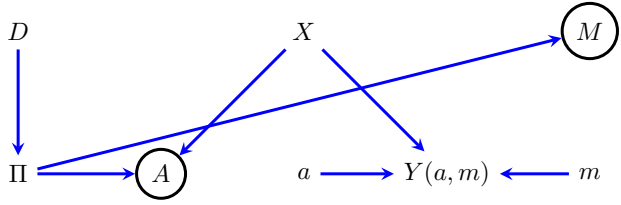
\begin{figure}[ht]
    \centering
    \scalebox{0.95}{
        \begin{tikzpicture}[>=stealth, node distance=2cm]
            \tikzstyle{square} = [draw, thick, minimum size=1.0mm, inner sep=3pt]
            \begin{scope}
                \path[->, very thick]
                node[circle, draw] (a) {$A$}
                node[right of=a] (af) {$a$}
                node[right of=af] (y) {$Y(a,m)$}
                node[above of=a,xshift=2cm] (x) {$X$}
                node[right of=m] (mf) {$m$}
                node[above of=mf, circle, draw] (m) {$M$}
                node[left of=a, circle, draw] (pi) {$\Pi$}
                node[above of=pi] (d) {$D$}
                
                (af) edge[blue] (y)
                (x) edge[blue] (a)
                (x) edge[blue] (y)
                (d) edge[blue] (pi)
                (pi) edge[blue] (a)
                (mf) edge[blue] (y)
                (pi) edge[blue] (m)
                ;
            \end{scope}
        \end{tikzpicture}
    }
    \caption{SWIG showing $\mathcal{G}(a,m)$.}
    \label{fig:swig}
\end{figure}

In this section, we give a brief overview of additional topics in causal inference that we use in our proofs and provide references for further reading on these topics.

First, independencies implied in the full data distribution $P({\bf V})$ can be read off from $\mathcal{G}$ using the d-separation criterion \citep{pearl2014probabilistic}. Next, we follow \citep{richardson2013single} to convert a DAG $\mathcal{G}$, given an intervened on variable $W$, to a single-world intervention graph (SWIG) using a node-splitting transformation as follows: (1) replace all children of $W$ (here, $V$) with the potential outcome random variable $V(w)$, (2) add the intervened on variable $w$ into $\mathcal{G}$ as a new vertex, and (3) change all outgoing edges from $W$ to originate from $w$ instead. This new SWIG is denoted by $\mathcal{G}(w)$. It is possible to intervene on multiple variables in $\mathcal{G}$ by repeatedly applying the node-splitting transformation described above. \cref{fig:swig} shows the SWIG $\mathcal{G}(a, m)$, the SWIG containing the potential outcome random variable $Y(a,m)$ corresponding to the target estimand $\E[Y(A=\pi_e, M=f_M(\pi_e))]$. 

In order to connect potential outcome distributions to observed distributions, \citep{malinsky2019potential} propose a set of three rules known as the {\it potential outcome calculus} (po-calculus). Most salient to our proofs is Rule 2 of po-calculus, which states that $P(V(w) \mid {\bf X}) = P(V \mid {\bf X}, W=w)$ if $V(w) \ci W \mid {\bf X}$ in $\mathcal{G}(w)$, where ${\bf X}$ is any set of random variables in $\mathcal{G}(w)$, including the empty set. Rule 2 is also referred to as the {\it conditional ignorability} assumption commonly used in causal inference, which states that a potential outcome of interest is independent of a treatment of interest conditional on a sufficiently rich set of covariates ${\bf X}$. The benefit of using Rule 2 of po-calculus is that it allows us to use d-separation in a SWIG to directly verify whether conditional ignorability holds given a graph. For instance, note that $Y(a,m) \ci A,M \mid X$ in the SWIG shown in \cref{fig:swig}. Rule 2 of po-calculus thus allows us to conclude that $P(Y(a,m) \mid X) = P(Y \mid A=a, M=m, X)$.

We now formally define some concepts from our discussion above that we use in our proofs.

\begin{corollary}[Consistency]\label{consistency}
    Under~\cref{asmp:scm}, if $A = a, M = m$, then $Y = Y(A = a, M = m)$.
\end{corollary}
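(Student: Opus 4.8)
The plan is to prove this by direct substitution into the structural equation for $Y$, since consistency here is essentially a definitional property of the SCM in \cref{asmp:scm}. The key observation is that the structural function $f_Y$ that generates the observed outcome is the \emph{same} function used to define the potential outcome, and that the arguments other than $A$ and $M$ are unaffected by fixing $A = a$ and $M = m$.

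First, I would recall the two relevant objects. The structural equation in \cref{asmp:scm} gives the observed outcome as $Y = f_Y(A, X, M, \epsilon_Y)$, while the potential outcome notation introduced in \cref{sec:identification} defines $Y(A = a, M = m) \coloneqq f_Y(a, X, m, \epsilon_Y)$. The entire content of the corollary is that these two expressions coincide on the event $\{A = a, M = m\}$.

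Next, I would fix a realization of the exogenous variables $(\epsilon_D, \epsilon_X, \epsilon_Y)$ and suppose that, on this realization, $A = a$ and $M = m$. Substituting these two equalities into the structural equation yields $Y = f_Y(A, X, M, \epsilon_Y) = f_Y(a, X, m, \epsilon_Y)$, and the right-hand side is exactly $Y(A = a, M = m)$ by definition, giving the claim. The only point requiring care is that the remaining inputs $X$ and $\epsilon_Y$ take identical values in both expressions; this holds because $X = f_X(\epsilon_X)$ and $\epsilon_Y$ are exogenous (or functions of exogenous noise) that are not descendants of $A$ or $M$ in $\mathcal{G}$, so the node-splitting operation that defines the intervention on $A$ and $M$ leaves their realized values unchanged.

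I do not anticipate a genuine obstacle here, as the result is immediate once the shared structural function $f_Y$ and the invariance of its non-intervened arguments are made explicit; the main thing to state cleanly is precisely this invariance of $X$ and $\epsilon_Y$ under the intervention.
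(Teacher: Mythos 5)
Your proof is correct and matches the argument the paper implicitly relies on: the paper states this corollary without proof as a definitional consequence of \cref{asmp:scm}, since $Y = f_Y(A, X, M, \epsilon_Y)$ and $Y(A=a, M=m) \coloneqq f_Y(a, X, m, \epsilon_Y)$ share the same structural function and the same realized $X$ and $\epsilon_Y$. Your direct substitution on the event $\{A=a, M=m\}$, together with the observation that $X$ and $\epsilon_Y$ are non-descendants of the intervened nodes, is exactly the right (and complete) justification.
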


\begin{corollary}[Conditional Ignorability]\label{rule_2}
    Under~\cref{asmp:scm}, since $Y(a, m) \ci A, M \mid X$, then $P(Y(a,m) \mid X) = P(Y \mid A=a, M=m, X)$.
\end{corollary}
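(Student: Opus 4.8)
The plan is to establish the claimed equality of distributions in two moves: first verify the stated conditional independence $Y(a,m) \ci A, M \mid X$ by reading off a d-separation relation in the single-world intervention graph (SWIG) $\mathcal{G}(a,m)$ of \cref{fig:swig}, and then invoke Rule 2 of po-calculus (as set up in \cref{app:overview}) to convert that independence into the equality $P(Y(a,m) \mid X) = P(Y \mid A=a, M=m, X)$.

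First I would construct $\mathcal{G}(a,m)$ explicitly from the DAG $\mathcal{G}$ of \cref{asmp:scm} by applying the node-splitting transformation twice, once for $A$ and once for $M$. This splits $A$ into the natural node $A$ (retaining its incoming edges $X \to A$ and $\Pi \to A$) and the fixed value $a$, splits $M$ into $M$ (retaining $\Pi \to M$) and $m$, and replaces the shared child $Y$ with the potential outcome $Y(a,m)$, whose only parents are $a$, $m$, and $X$. The resulting graph is exactly the one in \cref{fig:swig}, with $a \to Y(a,m)$, $m \to Y(a,m)$, and $X \to Y(a,m)$.

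Next I would verify $Y(a,m) \ci A, M \mid X$ by showing that every path from $Y(a,m)$ to $A$ or to $M$ is blocked given $X$. The only edge leaving $Y(a,m)$ into the non-intervention part of the graph runs through its parent $X$, since the other parents $a$ and $m$ are fixed and have no further connections. For the path to $A$, the unique route is $Y(a,m) \leftarrow X \to A$, a fork at $X$ that is blocked once we condition on $X$. For the path to $M$, the unique route is $Y(a,m) \leftarrow X \to A \leftarrow \Pi \to M$, which contains a collider at $A$; because $A$ is not in the conditioning set and has no descendants in it, this collider stays closed and blocks the path. Hence $X$ d-separates $Y(a,m)$ from both $A$ and $M$.

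Finally I would apply Rule 2 of po-calculus, which yields $P(Y(a,m) \mid X) = P(Y \mid A=a, M=m, X)$ precisely when $Y(a,m) \ci A, M \mid X$ holds in $\mathcal{G}(a,m)$; having established the d-separation in the prior step, the conclusion follows immediately. The main obstacle is the careful treatment of the collider at $A$ along the path to $M$: one must confirm that conditioning on $X$ alone does not inadvertently open this collider, which amounts to checking that neither $A$ nor any descendant of $A$ lies in the conditioning set. Everything else is a direct application of the d-separation criterion and the po-calculus machinery already introduced in \cref{app:overview}.
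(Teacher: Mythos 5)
Your proposal is correct and follows essentially the same route as the paper, which justifies this corollary by reading off the d-separation $Y(a,m) \ci A, M \mid X$ from the SWIG $\mathcal{G}(a,m)$ in the appendix overview and then invoking Rule 2 of po-calculus. You simply make explicit the node-splitting construction and the collider-at-$A$ check that the paper leaves implicit.
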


\section{Algorithmic Viewpoint of Computing Bounds}
\label[secinapp]{app:algorithmic_bounds}

Here, we give an algorithmic construction of the lower and upper bounds presented in~\cref{thm:bounds_with_control}. First, we define sub-algorithms in~\cref{alg:sub_algs} that compute the lower and upper bounds conditional on a value of $x \in \cX$. Next, we define the algorithm that constructs the complete bounds in~\cref{alg:bound_alg} by iterating through all possible values of $x \in \cX$ and computing the weighted average of the bounds for each $x$.

\begin{algorithm}[ht]
    \caption{Sub-Algorithms for computing lower and upper bounds conditional on $X$.}
    \begin{algorithmic}[1]
        \State Input: $x$
        \State Output: Tight lower bound on $Y$ conditional on $X=x$
        \Function{Conditional\_Lower}{$x$}
            \If{$\pi_e(x) \neq a_0$}
                \If{$\Petl(x) \neq \varnothing$}
                    \Return $\E[Y \mid X=x, \Pi \in \Petl(x)]$
                \ElsIf{$\Petl(x) = \varnothing$}
                    \Return $Y_{min}$
                \EndIf
            \ElsIf{$\pi_e(x) = a_0$}
                \If{$\Pe(x) \neq \varnothing$}
                    \Return $\E[Y \mid X=x, \Pi \in \Pe(x)]$
                \ElsIf{$\Pe(x) = \varnothing$}
                    \Return $Y_{min}$
                \EndIf
            \EndIf
        \EndFunction
        \\
        \State Input: $x$
        \State Output: Tight upper bound on $Y$ conditional on $X=x$
        \Function{Conditional\_Upper}{$x$}
            \If{$\pi_e(x) \neq a_0$}
                \If{$\Petg(x) \neq \varnothing$}
                    \Return $\E[Y \mid X=x, \Pi \in \Petg(x)]$
                \ElsIf{$\Petg(x) = \varnothing$}
                    \Return $Y_{max}$
                \EndIf
            \ElsIf{$\pi_e(x) = a_0$}
                \If{$\Pe(x) \neq \varnothing$}
                    \Return $\E[Y \mid X=x, \Pi \in \Pe(x)]$
                \ElsIf{$\Pe(x) = \varnothing$}
                    \Return $Y_{max}$
                \EndIf
            \EndIf
        \EndFunction
    \end{algorithmic}
    \label{alg:sub_algs}
\end{algorithm}
\begin{algorithm}[ht]
    \caption{Algorithm for computing lower and upper bounds for $\E[Y(A=\pi_e, M=f_M(\pi_e)]$.}
    \begin{algorithmic}[1]
        \State $lower\_bound \gets 0$
        \State $upper\_bound \gets 0$
        \For{$x \in \cX$}
            \State $lower\_bound \gets lower\_bound + \textsc{Conditional\_Lower}(x) \times P(X=x)$
            \State $upper\_bound \gets upper\_bound + \textsc{Conditional\_Upper}(x) \times P(X=x)$
        \EndFor \\
        \Return $(lower\_bound, upper\_bound)$
    \end{algorithmic}
    \label{alg:bound_alg}
\end{algorithm}

\section{Data Generating Process of Simulation Study}
\label[secinapp]{app:dgp}

Here, we define in detail the data generating process for the simulation study in~\cref{sec:simulation}. The trialed models are defined as $\pi_0 = 0$, $\pi_1 = \1{X=1}$, and $\pi_2 = \1{X \in \{2, 3\}}$. We use $\text{acc}(\pi_i)$ to denote the accuracy of policy $\pi_i$ at predicting disease $O$. One can verify that this data generating process obeys the structural causal model given in~\cref{asmp:scm}.
\begin{align*}
   X &\sim \text{discrete uniform}[0, 3] \\
   D &\sim \text{discrete uniform}[0, 2] \\
   \Pi &= \pi_D \\
   A &= \pi_D(X) \\
   O &\sim \text{Bernoulli}(\1{X=0}0.9 + \1{X=1}0.7 + \1{X=2}0.6 + \1{X=3}0.5) \\
   M &= \text{acc}(\Pi) \\
   Y &\sim \text{Bernoulli}(\1{X=0}(0.46 + ((1+m)/2) \cdot 0.18A) + \1{X=1}(0.37 + ((1+m)/2) \cdot 0.56A) \\
   &\quad \quad + \1{X=2}(0.46 + ((1+m)/2) \cdot 0.48A) + \1{X=3}(0.55 + ((1+m)/2) \cdot 0.4A)
\end{align*}
We now explain in detail how the probabilities of the survival rate $Y$ are computed. First, conditional on patients who have the disease ($O=1$), the survival rates without and with alerts are as follows:
\begin{table}[ht]
    \centering
    \caption{Probability of survival ($\E[Y(a,m)]$) under no alerts ($\E[Y(0,m)]$) and alerts ($\E[Y(1,m)]$) for the simulation study.}
    \label{tab:survival_rates}
    \scalebox{1}{
    \begin{tabular}{ccc}
         & \multicolumn{1}{|c}{$X=0$} & \multicolumn{1}{|c}{$X=1$} \\
         \hline
         $\E[Y(0, m) \mid O=1]$ & \multicolumn{1}{|c}{$0.4$} & \multicolumn{1}{|c}{$0.1$} \\
         $\E[Y(1, m) \mid O=1]$ & \multicolumn{1}{|c}{$0.4 + ((1+m)/2) \cdot 0.2$} & \multicolumn{1}{|c}{$0.1 + ((1+m)/2) \cdot 0.8$} \\
         \\
         & \multicolumn{1}{|c}{$X=2$} & \multicolumn{1}{|c}{$X=3$} \\
         \hline
         $\E[Y(0, m) \mid O=1]$ & \multicolumn{1}{|c}{$0.1$} & \multicolumn{1}{|c}{$0.1$} \\
         $\E[Y(1, m) \mid O=1]$ & \multicolumn{1}{|c}{$0.1 + ((1+m)/2) \cdot 0.8$} & \multicolumn{1}{|c}{$0.1 + ((1+m)/2) \cdot 0.8$}
    \end{tabular}
    }
\end{table}

When patients do not have the disease ($O=0$), their probability of survival is $1$, that is $\E[Y(a, m) \mid O=0]=1$ for all values of $A$ and $M$. The assumption that lower model performance has a weakening effect on the positive effect of alerting is implicit in~\cref{tab:survival_rates}. For patients with $X=0$ and $O=1$, the survival probability with alerts range from $0.5$ when accuracy is $0$ to $0.6$ when accuracy is $1$. For all other patients, the survival probability with alerts range from $0.5$ when accuracy is $0$ to $0.9$ when accuracy is $1$.

Now, to compute the probabilities of survival given in~\cref{tab:dgp_O_Y}, we apply the law of total probability -- $\E[Y(a, m)] = \E[Y(a, m) \mid O=0]p(O=0) + \E[Y(a, m) \mid O=1]p(O=1)$ -- for each value of $X$. We give this explicit computation in the following table.
\begin{table}[ht]
    \centering
    \caption{Probability of survival ($\E[Y(a,m)]$) computed explicitly when summed over the likelihood of developing the disease ($O=1$).}
    \label{tab:survival_prob}
    \scalebox{1}{
    \begin{tabular}{ccc}
         & \multicolumn{1}{|c}{$X=0$} & \multicolumn{1}{|c}{$X=1$} \\
         \hline
         $\E[Y(a, m)]$ & \multicolumn{1}{|c}{$1 \cdot 0.1 + (0.4 + ((1+m)/2)0.2a) \cdot 0.9$} & \multicolumn{1}{|c}{$1 \cdot 0.3 + (0.1 + ((1+m)/2)0.8a) \cdot 0.7$} \\
         \\
         & \multicolumn{1}{|c}{$X=2$} & \multicolumn{1}{|c}{$X=3$} \\
         \hline
         $\E[Y(a, m)]$ & \multicolumn{1}{|c}{$1 \cdot 0.4 + (0.1 + ((1+m)/2)0.8a) \cdot 0.6$} & \multicolumn{1}{|c}{$1 \cdot 0.5 + (0.1 + ((1+m)/2)0.8a) \cdot 0.5$}
    \end{tabular}
    }
\end{table}

After simplifying the expressions above, we get the probabilities of survival shown in~\cref{tab:dgp_O_Y}. Note that the dependence of $Y$ on $O$ does not violate~\cref{asmp:scm} and does not change~\cref{fig:dag_setup} because $O$ is an unobserved variable on the directed path from $X$ to $Y$.

\section{Proofs}
\label[secinapp]{app:proofs}

\FalsifyMonotonicity*

\begin{proof}
    We will show that if~\cref{asmp:more_reliable_better} holds, then the stated observation will yield a contradiction. We will first show that the~\cref{asmp:more_reliable_better} implies a point-wise inequality over $x$, and then show that this inequality holds when aggregating over $X \in \cX_{\text{agree}}$. First, choose any $x \in \cX_{\text{agree}}$. Then we can write
    \begin{align}
        \E[Y \mid X = x, \Pi = \pi_i] &= \E[Y \mid X = x, A = \pi_i(x), M = f_M(\pi_i), \Pi = \pi_i] \label{eq:proof_monotonicity_1} \\
                                      &= \E[Y(A = \pi_i(x), M = f_M(\pi_i)) \mid X = x, A = \pi_i(x), M = f_M(\pi_i), \Pi = \pi_i] \label{eq:proof_monotonicity_2} \\
                                      &= \E[Y(A = \pi_i(x), M = f_M(\pi_i)) \mid X = x] \label{eq:proof_monotonicity_3}
    \end{align}
    where~\cref{eq:proof_monotonicity_1} follows from the implication $\{\Pi = \pi_i, X = x\} \implies \{A = \pi_i(x), M = f_M(\pi_i)\}$,~\cref{eq:proof_monotonicity_2} follows from consistency (\cref{consistency}), and~\cref{eq:proof_monotonicity_3} follows from conditional ignorability (\cref{rule_2}). Since~\cref{eq:proof_monotonicity_3} holds for both policies $\pi_1, \pi_2$, we can then write that 
    \begin{align}
        &\E[Y \mid X = x, \Pi = \pi_2] - \E[Y \mid X = x, \Pi = \pi_1] \nonumber \\
        &\qquad = \E[Y(A = \pi_2(x), M = f_M(\pi_2)) - Y(A = \pi_1(x), M = f_M(\pi_1)) \mid X = x] \label{eq:proof_monotonicity_4} \\
                                                                      & \qquad \geq 0 \label{eq:proof_monotonicity_5}
    \end{align}
    where~\cref{eq:proof_monotonicity_4} follows from linearity of expectation, and~\cref{eq:proof_monotonicity_5} follows from~\cref{asmp:more_reliable_better}, since $\pi_1(x) = \pi_2(x)$ by construction, and $f_M(\pi_2) > f_M(\pi_1)$. To aggregate, we first observe that $X \ci \Pi$ in our data generating process (as the policies are assigned randomly and independently of $X$).  Hence $P(X \mid \Pi = \pi_2, X \in \cX_{\text{agree}}) = P(X \mid \Pi = \pi_1, X \in \cX_{\text{agree}}) = P(X \mid X \in \cX_{\text{agree}})$.  Accordingly, we can write that 
    \begin{align*}
        &\E[Y \mid X \in \cX_{\text{agree}}, \Pi = \pi_2] - \E[Y \mid X \in \cX_{\text{agree}}, \Pi = \pi_1] \\
        &\qquad = \int_{x} \E[Y \mid X = x, \Pi = \pi_2] dP(x \mid \Pi = \pi_2, X \in \cX_{\text{agree}}) - \int_{x} \E[Y \mid X = x, \Pi = \pi_1] dP(x \mid \Pi = \pi_1, X \in \cX_{\text{agree}}) \\
        &\qquad = \int_{x} \E[Y \mid X = x, \Pi = \pi_2] dP(x \mid X \in \cX_{\text{agree}}) - \int_{x} \E[Y \mid X = x, \Pi = \pi_1] dP(x \mid X \in \cX_{\text{agree}}) \\
        &\qquad = \int_{x} \left(\E[Y \mid X = x, \Pi = \pi_2] - \E[Y \mid X = x, \Pi = \pi_1]\right) dP(x \mid X \in \cX_{\text{agree}}) \\
        &\qquad \geq 0
    \end{align*}
    where the final inequality follows from the point-wise inequality in~\cref{eq:proof_monotonicity_5}, and which directly gives the implication
    \begin{equation*}
        \E[Y \mid X \in \cX_{\text{agree}}, \Pi = \pi_2] \geq \E[Y \mid X \in \cX_{\text{agree}}, \Pi = \pi_1],
    \end{equation*}
    which is contradicted in the case where the stated observation (the relationship $<$ instead of $\geq$) holds.
\end{proof}

\FalsifyNeutralActions*
\begin{proof}
Our proof follows a similar structure to that of~\cref{prop:falsify_monotonicity}. We will show that if~\cref{asmp:A_control_value} holds, then the stated observation would yield a contradiction. We will first show that the~\cref{asmp:A_control_value} implies a point-wise equality over $x$, and then show that this inequality holds when aggregating over $X \in \cX_{a_0}$. First, choose any $x \in \cX_{a_0}$. Then we can write
    \begin{align}
        \E[Y \mid X = x, \Pi = \pi_i] &= \E[Y(A = \pi_i(x), M = f_M(\pi_i)) \mid X = x] \label{eq:proof_neutral_3}
    \end{align}
    using the same argument as in~\cref{prop:falsify_monotonicity} (namely, the implication that $\{\Pi = \pi_i, X = x\} \implies \{A = \pi_i(x), M = f_M(\pi_i)\}$, consistency (\cref{consistency}), and conditional ignorability (\cref{rule_2}). Since~\cref{eq:proof_neutral_3} holds for both policies $\pi_1, \pi_2$, we can then write that 
    \begin{align}
        &\E[Y \mid X = x, \Pi = \pi_2] - \E[Y \mid X = x, \Pi = \pi_1] \nonumber \\
        &\qquad = \E[Y(A = \pi_2(x), M = f_M(\pi_2)) - Y(A = \pi_1(x), M = f_M(\pi_1)) \mid X = x] \label{eq:proof_neutral_4} \\
        &\qquad = \E[Y(A = a_0, M = f_M(\pi_2)) - Y(A = a_0, M = f_M(\pi_1)) \mid X = x] \label{eq:proof_neutral_5} \\
        & \qquad =0 \label{eq:proof_neutral_6}
    \end{align}
    where~\cref{eq:proof_neutral_4} follows from linearity of expectation,~\cref{eq:proof_neutral_5} follows from the fact that $x \in \cX_{a_0}$, and~\cref{eq:proof_neutral_6} follows from~\cref{asmp:A_control_value}, which states that $Y(A=a_0, M = m) = Y(A = a_0, M = m')$ for any $m, m'$.

    To aggregate, we first observe (similar to the proof of~\cref{prop:falsify_monotonicity}) that $X \ci \Pi$ in our data generating process (as the policies are assigned randomly and independently of $X$).  As a result, we can write that 
    \begin{align*}
        &\E[Y \mid X \in \cX_{a_0}, \Pi = \pi_2] - \E[Y \mid X \in \cX_{a_0}, \Pi = \pi_1] \\
        &\qquad = \int_{x} \E[Y \mid X = x, \Pi = \pi_2] dP(x \mid \Pi = \pi_2, X \in \cX_{a_0}) - \int_{x} \E[Y \mid X = x, \Pi = \pi_1] dP(x \mid \Pi = \pi_1, X \in \cX_{a_0}) \\
        &\qquad = \int_{x} \E[Y \mid X = x, \Pi = \pi_2] dP(x \mid X \in \cX_{a_0}) - \int_{x} \E[Y \mid X = x, \Pi = \pi_1] dP(x \mid X \in \cX_{a_0}) \\
        &\qquad = \int_{x} \left(\E[Y \mid X = x, \Pi = \pi_2] - \E[Y \mid X = x, \Pi = \pi_1]\right) dP(x \mid X \in \cX_{a_0}) \\
        &\qquad = 0
    \end{align*}
    where the final equality follows from the point-wise equality in~\cref{eq:proof_neutral_6}, and which directly gives the implication
    \begin{equation*}
        \E[Y \mid X \in \cX_{a_0}, \Pi = \pi_2] = \E[Y \mid X \in \cX_{a_0}, \Pi = \pi_1],
    \end{equation*}
    which is contradicted in the case where the stated observation (an inequality instead of equality) holds.
\end{proof}

\subsection{Proof of Theorem \ref{thm:bounds_with_control}}

Before we give the proof of~\cref{thm:bounds_with_control}, we restate~\cref{def:partitions} from the main text, for ease of reference when reviewing the proof.
\PartitionsDef*

Armed with~\cref{def:partitions}, we first state some useful inequalities that will form the core of the proof for~\cref{thm:bounds_with_control}

\begin{restatable}[Independence under neutral actions]{lemma}{IndepNeutralLemma}\label{lemma:independence_under_neutral}
    Under the assumed data generating process (\cref{asmp:scm}) and \cref{asmp:A_control_value},
    \begin{equation}
        Y \ci M \mid X, A = a_0
    \end{equation}
\end{restatable}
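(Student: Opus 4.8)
The plan is to reduce the conditional-independence claim to a statement about conditional laws indexed by the performance level, and then discharge that statement in one stroke using conditional ignorability together with the neutral-action assumption. Concretely, $Y \ci M \mid X, A = a_0$ is equivalent to the assertion that, within the subpopulation $\{A = a_0\}$, the conditional distribution of $Y$ given $(X, M)$ depends on $X$ alone; that is, for $P$-almost every $x$ and any two performance levels $m, m'$ in the support of $M$ given $\{X = x, A = a_0\}$, one has $P(Y \mid X = x, A = a_0, M = m) = P(Y \mid X = x, A = a_0, M = m')$. So it suffices to prove this last equality.

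First I would rewrite each side as a potential-outcome law. Fixing $x$ and $m$, conditional ignorability (\cref{rule_2}), which holds because $Y(a, m) \ci A, M \mid X$ in the SWIG $\mathcal{G}(a, m)$ of \cref{fig:swig}, gives $P(Y \mid X = x, A = a_0, M = m) = P(Y(a_0, m) \mid X = x)$, and likewise with $m'$ in place of $m$. Next I would invoke the neutral-action assumption (\cref{asmp:A_control_value}), which asserts that $Y(A = a_0, M = m) = Y(A = a_0, M = m')$ as the same random variable for every pair $m, m'$. Hence $P(Y(a_0, m) \mid X = x) = P(Y(a_0, m') \mid X = x)$ trivially, and chaining the two ignorability identities yields $P(Y \mid X = x, A = a_0, M = m) = P(Y \mid X = x, A = a_0, M = m')$, as required.

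The hard part will be the measure-theoretic bookkeeping around conditioning on the event $\{A = a_0\}$: one must make sure the identities are read on the support of $M$ given $\{X = x, A = a_0\}$ (so that all conditional laws are well defined), and that invoking \cref{rule_2} at the neutral action is legitimate, which it is since that corollary is stated for arbitrary $(a, m)$. As a structural sanity check that avoids these subtleties, note that under \cref{asmp:scm} consistency (\cref{consistency}) gives $Y = f_Y(a_0, X, M, \epsilon_Y)$ on the event $\{A = a_0\}$, which by \cref{asmp:A_control_value} reduces to some $g(X, \epsilon_Y)$ not depending on $M$; since $\epsilon_Y$ is independent of $(\epsilon_D, \epsilon_X)$ and hence of $(X, A, M)$ jointly, conditioning on $\{X, A = a_0, M\}$ leaves the law of $\epsilon_Y$ — and therefore of $Y = g(X, \epsilon_Y)$ — unchanged across values of $M$, which is exactly $Y \ci M \mid X, A = a_0$.
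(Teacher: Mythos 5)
Your proposal is correct and follows essentially the same route as the paper's proof: both reduce the claim to showing $P(Y \mid X = x, A = a_0, M = m) = P(Y \mid X = x, A = a_0, M = m')$ and then chain consistency, the neutral-action assumption (\cref{asmp:A_control_value}), and conditional ignorability (\cref{rule_2}), differing only in whether the intermediate potential-outcome law is written with or without the conditioning on $\{A = a_0, M = m\}$. Your closing structural argument via $Y = g(X, \epsilon_Y)$ and $\epsilon_Y \ci (X, A, M)$ is a valid and slightly more elementary alternative, but it is not needed given the first argument.
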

\begin{proof}
    This claim follows from~\cref{asmp:A_control_value} in a straightforward fashion.  Let $m$ and $m'$ be any two distinct values of $M$, then
    \begin{align*}
        P(Y \mid X = x, M = m, A = a_0) &= P(Y(A = a_0, M = m) \mid X = x, M = m, A = a_0) && \text{Consistency}\\
        &= P(Y(A = a_0, M = m') \mid X = x, M = m, A = a_0) && \text{By~\cref{asmp:A_control_value}} \\
        &= P(Y(A = a_0, M = m') \mid X = x, M = m', A = a_0) && Y(a, m) \ci M, A \mid X \\
        &= P(Y \mid X = x, M = m', A = a_0) && \text{Consistency} 
    \end{align*}
    The claim follows from the fact that we have shown equality $P(Y \mid X = x, M = m, A = a_0) = P(Y \mid X = x, M = m', A = a_0)$ for arbitrary $m, m'$.
\end{proof}

\begin{restatable}[Outcome Equalities / Inequalities]{lemma}{BoundLemma}\label{lemma:bound_lemma}
Under \cref{asmp:more_reliable_better,asmp:A_control_value}, the following inequalities hold, where we use $\pi_e$ as shorthand for $\pi_e(x)$,
\begin{align}
    \1{\Petl(x) \neq \varnothing} \E[Y \mid X = x, A=\pi_e, M=f_M(\pi_e)] &\geq \1{\Petl(x) \neq \varnothing} \E[Y \mid X = x, \Pi \in \Petl(x)] \label{eq:inequality_bounds_outcome_less} \\
    \1{\Petg(x) \neq \varnothing} \E[Y \mid X = x, A=\pi_e, M=f_M(\pi_e)] &\leq \1{\Petg(x) \neq \varnothing} \E[Y \mid X = x, \Pi \in \Petg(x)] \label{eq:inequality_bounds_outcome_great} \\
    \1{\Pe(x) \neq \varnothing, \pi_e=a_0} \E[Y \mid X = x, A=\pi_e, M=f_M(\pi_e)] &= \nonumber \\
    \1{\Pe(x) \neq \varnothing, \pi_e=&a_0} \E[Y \mid X = x, \Pi \in \Pe(x)] \label{eq:inequality_bounds_outcome_neutral}
\end{align}
\end{restatable}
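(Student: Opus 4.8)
The plan is to reduce all three statements to facts about the \emph{counterfactual} conditional means $\E[Y(a,m)\mid X=x]$ and then apply the two structural assumptions pointwise. The reusable building block is the identity already established inside the proof of~\cref{prop:falsify_monotonicity}: for any trialed model $\pi_i$ and any $x$,
\[
\E[Y \mid X=x, \Pi=\pi_i] = \E[Y(A=\pi_i(x), M=f_M(\pi_i)) \mid X=x],
\]
which follows from the deterministic implication $\{\Pi=\pi_i, X=x\}\Rightarrow\{A=\pi_i(x), M=f_M(\pi_i)\}$ together with consistency (\cref{consistency}) and conditional ignorability (\cref{rule_2}). First I would extend this from a single model to a set $S$ of trialed models that all \emph{agree in action} $a$ at $x$ and all share a \emph{common performance} $m$: conditioning on $\{\Pi\in S, X=x\}$ still pins down $A=a$ and $M=m$, so $\E[Y\mid X=x, \Pi\in S] = \E[Y(A=a, M=m)\mid X=x]$. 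Likewise, the target conditional appearing on the left of each display equals, by the same ignorability argument, $\E[Y(A=\pi_e(x), M=f_M(\pi_e))\mid X=x]$.

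For~\cref{eq:inequality_bounds_outcome_less}, suppose $\Petl(x)\neq\varnothing$. By~\cref{def:partitions} every model in $\Petl(x)$ outputs $\pi_e(x)$ at $x$ and, being an $\argmax$ over $f_M$, they all carry the same performance value $m_{\leq}\coloneqq \max_{\pi\in\Pel(x)} f_M(\pi) \leq f_M(\pi_e)$; hence the set version of the identity gives $\E[Y\mid X=x, \Pi\in\Petl(x)] = \E[Y(\pi_e(x), m_{\leq})\mid X=x]$. Applying~\cref{asmp:more_reliable_better} at the \emph{fixed} action $a=\pi_e(x)$ with $m_{\leq}\leq f_M(\pi_e)$ yields the pointwise dominance $Y(\pi_e(x), m_{\leq}) \leq Y(\pi_e(x), f_M(\pi_e))$, and taking $\E[\,\cdot\mid X=x]$ of both sides establishes the inequality. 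The argument for~\cref{eq:inequality_bounds_outcome_great} is the mirror image: models in $\Petg(x)$ share a performance $m_{\geq}\geq f_M(\pi_e)$, monotonicity reverses direction, and the inequality flips.

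The step I expect to be the main obstacle is~\cref{eq:inequality_bounds_outcome_neutral}, because unlike $\Petl(x)$ and $\Petg(x)$ the set $\Pe(x)$ places \emph{no} restriction on performance, so conditioning on $\{\Pi\in\Pe(x)\}$ may average $Y$ over several distinct values of $M$ and the set-version identity no longer applies directly. Here the neutrality of the action is essential: since $\pi_e(x)=a_0$, every model in $\Pe(x)$ outputs $a_0$ at $x$, so I would invoke~\cref{lemma:independence_under_neutral} ($Y\ci M\mid X, A=a_0$) to collapse the mixture, obtaining $\E[Y\mid X=x, \Pi\in\Pe(x)] = \E[Y\mid X=x, A=a_0]$ independently of which performance levels appear in $\Pe(x)$. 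The same lemma applied to the target side gives $\E[Y\mid X=x, A=\pi_e, M=f_M(\pi_e)] = \E[Y\mid X=x, A=a_0, M=f_M(\pi_e)] = \E[Y\mid X=x, A=a_0]$, so the two sides coincide.

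Finally, in each of the three cases the leading indicator is handled uniformly: when it equals $0$ both sides are $0$ and the relation is trivial, and when it equals $1$ we are exactly in the regime analyzed above, so multiplying through by the indicator preserves each (in)equality.
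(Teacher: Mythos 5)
Your proof is correct and follows essentially the same route as the paper's: reduce each observational conditional to a counterfactual mean via consistency and ignorability, apply performance monotonicity at the fixed action for the two inequalities, and use the neutral-action independence lemma to collapse the mixture over $M$ for the equality case. Your observation that all models in $\Petl(x)$ (respectively $\Petg(x)$) share a single common performance value, being an argmax (respectively argmin), is a slightly cleaner phrasing of the identification step than the paper's conditioning on the event $\{M \leq f_M(\pi_e)\}$, but the substance is identical.
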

\begin{proof}
    The proof for each of these relations follows a similar structure. For each, we only need to consider the relation when the stated indicator (identical on either side) is equal to $1$, since all are trivially true when the indicator is equal to zero.

    For~\cref{eq:inequality_bounds_outcome_less}, the event $\{\Pi \in \Petl(x)\}$ implies $\{A = \pi_e(x), M \leq f_M(\pi_e)\}$ from the definition of $\Petl(x)$ (see~\cref{def:partitions}). By the independence $Y \ci \Pi \mid X, A, M$, we have it that $\E[Y \mid X = x, \Pi \in \Petl(x)] = E[Y \mid X = x, A = \pi_e, M \leq f_M(\pi_e)]$, and this conditional expectation is well-defined when $\Petl(x) \neq \varnothing$. Finally, we make use of~\cref{asmp:more_reliable_better}, which implies that $\E[Y \mid X = x, A=\pi_e, M=f_M(\pi_e)] \geq \E[Y \mid X = x, A=\pi_e, M \leq f_M(\pi_e)]$. The stated inequality follows.

    For~\cref{eq:inequality_bounds_outcome_great}, the event $\{\Pi \in \Petg(x)\}$ implies $\{A = \pi_e(x), M \geq f_M(\pi_e)\}$ from the definition of $\Petg(x)$ (see~\cref{def:partitions}). By the independence $Y \ci \Pi \mid X, A, M$, we have it that $\E[Y \mid X = x, \Pi \in \Petg(x)] = E[Y \mid X = x, A = \pi_e, M \geq f_M(\pi_e)]$, and this conditional expectation is well-defined when $\Petg(x) \neq \varnothing$. Finally, we make use of~\cref{asmp:more_reliable_better}, which implies that $\E[Y \mid X = x, A=\pi_e, M=f_M(\pi_e)] \leq \E[Y \mid X = x, A=\pi_e, M \geq f_M(\pi_e)]$. The stated inequality follows.

    Finally, for~\cref{eq:inequality_bounds_outcome_neutral}, we can observe that
    \begin{align}
        &\1{\Pe(x) \neq \varnothing, \pi_e=a_0} \E[Y \mid X = x, \Pi \in \Pe(x)] \nonumber \\
        &= \1{\Pe(x) \neq \varnothing, \pi_e=a_0} \E[Y \mid X = x, A = \pi_e(x), \Pi \in \Pe(x)] \label{eq:inequality_bounds_proof_1}\\
        &= \1{\Pe(x) \neq \varnothing, \pi_e=a_0} \E_{M}[\E[Y \mid X = x, A = \pi_e(x), \Pi \in \Pe(x), M] \mid X = x, A = \pi_e(x), \Pi \in \Pe(x)] \label{eq:inequality_bounds_proof_2}\\
        &= \1{\Pe(x) \neq \varnothing, \pi_e=a_0} \E_{M}[\E[Y \mid X = x, A = \pi_e(x), M] \mid X = x, A = \pi_e(x), \Pi \in \Pe(x)] \label{eq:inequality_bounds_proof_3} \\
        &= \1{\Pe(x) \neq \varnothing, \pi_e=a_0} \E_{M}[\E[Y \mid X = x, A = a_0, M] \mid X = x, A = \pi_e(x), \Pi \in \Pe(x)] \label{eq:inequality_bounds_proof_4} \\
        &= \1{\Pe(x) \neq \varnothing, \pi_e=a_0} \E_{M}[\E[Y \mid X = x, A = a_0] \mid X = x, A = \pi_e(x), \Pi \in \Pe(x)] \label{eq:inequality_bounds_proof_5} \\
        &= \1{\Pe(x) \neq \varnothing, \pi_e=a_0} \E[Y \mid X = x, A = a_0] \label{eq:inequality_bounds_proof_6}\\
        &= \1{\Pe(x) \neq \varnothing, \pi_e=a_0} \E[Y \mid X = x, A = \pi_e, M = f_M(\pi_e)] \label{eq:inequality_bounds_proof_7}
    \end{align}
    where~\cref{eq:inequality_bounds_proof_1} follows from the fact that the event $\{\Pi \in \Pe(x)\} \implies \{ A = \pi_e(x)\}$ by the definition of $\Pe(x)$ (see~\cref{def:partitions}). \Cref{eq:inequality_bounds_proof_2} simply applies the law of total probability, including $M$ in the inner expectation (and where we use $\E_M$ as helpful shorthand to remind the reader that the outer expectation is taken over $M$). \Cref{eq:inequality_bounds_proof_3} uses the fact that $Y \ci \Pi \mid A, M, X$ in our data-generating process to remove $\Pi$ from the inner expectation. \Cref{eq:inequality_bounds_proof_4} replaces $\pi_e(x)$ with $a_0$ due to the outside indicator that restricts to $x$ where $\pi_e(x) = a_0$, so whenever this expression is non-zero, then $\pi_e(x) = a_0$.  Finally,~\cref{eq:inequality_bounds_proof_5} uses~\cref{lemma:independence_under_neutral}, which implies that $\E[Y \mid X = x, A = a_0, M = m] = \E[Y \mid X = x, A = a_0]$, and~\cref{eq:inequality_bounds_proof_6} uses the fact that the inner expectation is a constant value, to remove the outer expectation over $M$. From~\cref{eq:inequality_bounds_proof_6}, we can simply add back the conditioning on $M = f_M(\pi_e)$, again using~\cref{lemma:independence_under_neutral}, and recall that $\pi_e = a_0$ under the indicator, to arrive at~\cref{eq:inequality_bounds_proof_7}, which completes the proof.
\end{proof}

We will also make use of the following inequalities that follow from boundedness of $Y$ under~\cref{asmp:bounded_Y}.

\begin{restatable}[Boundedness]{lemma}{BoundednessCorr}\label{lemma:bounded_outcomes}
    Under~\cref{asmp:bounded_Y}, the following inequalities hold by the fact that $Y \in [Y_{\text{min}}, Y_{\text{max}}]$.
    \begin{align*}
        \1{\Petl(X) = \varnothing} \E[Y \mid X = x, A=\pi_e, M=f_M(\pi_e)] &\geq \1{\Petl(X) = \varnothing} Y_{\text{min}} \\
        \1{\Petg(X) = \varnothing} \E[Y \mid X = x, A=\pi_e, M=f_M(\pi_e)] &\leq \1{\Petg(X) = \varnothing} Y_{\text{max}} \\
        \1{\Pe(X) = \varnothing, \pi_e(x) = a_0} \E[Y \mid X = x, A=a_0, M=f_M(\pi_e)] &\geq \1{\Pe(X) = \varnothing, \pi_e(x) = a_0} Y_{\text{min}} \\
        \1{\Pe(X) = \varnothing, \pi_e(x) = a_0} \E[Y \mid X = x, A=a_0, M=f_M(\pi_e)] &\leq \1{\Pe(X) = \varnothing, \pi_e(x) = a_0} Y_{\text{max}}
    \end{align*}
\end{restatable}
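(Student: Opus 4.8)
The plan is to reduce all four inequalities to the single elementary fact that a conditional expectation of a $[Y_{\text{min}}, Y_{\text{max}}]$-valued random variable stays in $[Y_{\text{min}}, Y_{\text{max}}]$, and then to reinstate the multiplying indicators by a trivial case split. First I would note that under~\cref{asmp:bounded_Y} the bound applies not only to the observed $Y$ but to every potential outcome: since the observed outcome is itself a potential outcome and the bound is meant to hold across counterfactual worlds, we have $Y_{\text{min}} \le Y(a, m) \le Y_{\text{max}}$ for every $a \in \cA$ and every value $m$. Each conditional expectation of $Y$ appearing in the lemma is read, exactly as in the proof of~\cref{lemma:bound_lemma} (via consistency and~\cref{rule_2}), as a conditional potential outcome $\E[Y(a, f_M(\pi_e)) \mid X = x]$; by monotonicity of conditional expectation it lies in $[Y_{\text{min}}, Y_{\text{max}}]$, which gives the indicator-free inequalities $Y_{\text{min}} \le \E[\,\cdot\,] \le Y_{\text{max}}$.

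Next I would restore the indicators pointwise in $x$. For fixed $x$ each indicator is $0$ or $1$; if it is $0$ both sides of the relevant line vanish and the inequality holds with equality, while if it is $1$ multiplying the indicator-free bound by $1$ reproduces the stated inequality. This two-line argument applies verbatim to all four lines, only flipping between the lower bound $\ge Y_{\text{min}}$ and the upper bound $\le Y_{\text{max}}$; the third and fourth lines carry the extra conjunct $\pi_e(x) = a_0$, so that the written action $A = a_0$ there equals $\pi_e(x)$ wherever the indicator is active, and the argument is otherwise unchanged.

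I do not expect a genuine obstacle. The only point meriting a sentence of care is that these are counterfactual conditional expectations, well-defined precisely in the regime ($\Petl(X) = \varnothing$ or $\Pe(X) = \varnothing$) where no trialed policy realizes the relevant pairing of action and performance; reading them as potential-outcome expectations avoids any issue of conditioning on a probability-zero event, after which boundedness of the potential outcome supplies the result immediately.
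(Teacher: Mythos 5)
Your proposal is correct and matches the paper's proof, which simply observes that each claim is immediate from $Y \in [Y_{\text{min}}, Y_{\text{max}}]$ together with the fact that the indicators are identical on both sides of each inequality. Your additional remark that the conditional expectations should be read as potential-outcome expectations (to avoid conditioning on a probability-zero event) is a legitimate point of care that the paper leaves implicit, but it does not change the substance of the argument.
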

\begin{proof}
    Each claim is immediate from the fact that $Y$ is bounded between $Y_{\text{min}}$ and $Y_{\text{max}}$, with the additional observation that for each inequality, the indicators are identical on either side.
\end{proof}

We are now prepared to prove our main result.
\BoundsWithControl*
\begin{proof}
     We begin with the lower bound, and note that the upper bound follows similarly.  
     \paragraph{Lower Bound} First, we observe that the given set indicators form a partition over $\cX$ (a set of disjoint subsets of $\cX$ whose union is equal to $\cX$), such that 
    \begin{align}
        1 &= \1{\Petl(X) \neq \varnothing, \pi_e(X) \neq a_0} + \1{\Petl(X) = \varnothing, \pi_e(X) \neq a_0} \nonumber \\
          &\quad + \1{\Pe(X) \neq \varnothing, \pi_e(X) = a_0} + \1{\Pe(X) = \varnothing, \pi_e(X) = a_0}\label{eq:lb_partition_equals_one}
    \end{align}
    Here, the subsets of $X \in \cX$ satisfying each of $\{\Petl(X) \neq \varnothing\}$ and $\{\Petl(X) = \varnothing\}$ form a partition over $\cX$ by definition as does $\{\Pe(X) \neq \varnothing\}$ and $\{\Pe(X) = \varnothing\}$. Hence, we can write that 
    \begin{align}
        &\E[Y(A=\pi_e, M=f_M(\pi_e))]  \nonumber \\
        &\quad = \E[\E[Y(A=\pi_e, M=f_M(\pi_e)) \mid X]] \label{eq:law_of_iterated_expectation}\\
        &\quad = \E[\E[Y \mid X, A=\pi_e, M=f_M(\pi_e))]] \label{eq:thm_correctness_ignorability} \\
        &\quad = \E[(\1{\Petl(X) \neq \varnothing, \pi_e(X) \neq a_0} + \1{\Petl(X) = \varnothing, \pi_e(X) \neq a_0} \nonumber \\
        &\qquad \qquad + \1{\Pe(X) \neq \varnothing, \pi_e(X) = a_0} + \1{\Pe(X) = \varnothing, \pi_e(X) = a_0})\E[Y \mid X, A=\pi_e, M=f_M(\pi_e)]] \label{eq:thm_correctness_multiply_one} \\
        &\quad = \E\big[\1{\Petl(X) \neq \varnothing, \pi_e(X) \neq a_0} \E[Y \mid X, A = \pi_e, M = f_M(\pi_e)] \nonumber \\
        &\qquad + \1{\Petl(X) = \varnothing, \pi_e(X) \neq a_0} \E[Y \mid X, A = \pi_e, M = f_M(\pi_e)] \nonumber \\
        &\qquad + \1{\Pe(X) \neq \varnothing, \pi_e(X) = a_0} \E[Y \mid X, A = \pi_e, M = f_M(\pi_e)] \nonumber \\
        &\qquad + \1{\Pe(X) = \varnothing, \pi_e(X) = a_0} \E[Y \mid X, A = \pi_e, M = f_M(\pi_e)]\big] \label{eq:thm_correctness_distribute} 
    \end{align}
    where~\cref{eq:law_of_iterated_expectation} follows from the law of iterated expectation,~\cref{eq:thm_correctness_ignorability} follows from consistency (\cref{consistency}) and the fact that $Y(a, m) \ci A, M \mid X$ (\cref{rule_2}), and~\cref{eq:thm_correctness_multiply_one} follows from~\cref{eq:lb_partition_equals_one}.  After distributing terms in~\cref{eq:thm_correctness_distribute}, the lower bound follows from the application of~\cref{lemma:bound_lemma,lemma:bounded_outcomes} to each term.

    First, $\1{\Petl(X) \neq \varnothing, \pi_e(X) \neq a_0} \E[Y \mid X, A = \pi_e, M = f_M(\pi_e)] \geq \1{\Petl(X) \neq \varnothing, \pi_e(X) \neq a_0} \E[Y \mid X, \Pi \in \Petl(x)]$ follows from \cref{lemma:bound_lemma}.

    Next, $\1{\Petl(X) = \varnothing, \pi_e(X) \neq a_0} \E[Y \mid X, A = \pi_e, M = f_M(\pi_e)] \geq \1{\Petl(X) = \varnothing, \pi_e(X) \neq a_0} Y_{min}$ follows from \cref{lemma:bounded_outcomes}.

    Next, $\1{\Pe(X) \neq \varnothing, \pi_e(X) = a_0} \E[Y \mid X, A = \pi_e, M = f_M(\pi_e)] = \1{\Pe(X) \neq \varnothing, \pi_e(X) = a_0} \E[Y \mid X, \Pi \in \Pe(x)]$ follows from \cref{lemma:bound_lemma}.

    Finally, $\1{\Pe(X) = \varnothing, \pi_e(X) = a_0} \E[Y \mid X, A = \pi_e, M = f_M(\pi_e)] \geq \1{\Pe(X) = \varnothing, \pi_e(X) = a_0} Y_{min}$ follows from \cref{lemma:bounded_outcomes}.

    As each term above is either equal to or less than or equal to their respective corresponding terms, the sum of all the components above will be less than or equal to the target estimand.

    \paragraph{Upper Bound} For the upper bound, we use the partition given by 
    \begin{align}
        1 &= \1{\Petg(X) \neq \varnothing, \pi_e(X) \neq a_0} + \1{\Petg(X) = \varnothing, \pi_e(X) \neq a_0} \nonumber \\
          &\quad + \1{\Pe(X) \neq \varnothing, \pi_e(X) = a_0} + \1{\Pe(X) = \varnothing, \pi_e(X) = a_0}\label{eq:ub_partition_equals_one}
    \end{align}
    and the argument follows similarly, such that the upper bound follows from the application of~\cref{lemma:bound_lemma,lemma:bounded_outcomes} to each term.  

    First, $\1{\Petg(X) \neq \varnothing, \pi_e(X) \neq a_0} \E[Y \mid X, A = \pi_e, M = f_M(\pi_e)] \leq \1{\Petg(X) \neq \varnothing, \pi_e(X) \neq a_0} \E[Y \mid X, \Pi \in \Petg(X)]$ follows from \cref{lemma:bound_lemma}.

    Next, $\1{\Petg(X) = \varnothing, \pi_e(X) \neq a_0} \E[Y \mid X, A = \pi_e, M = f_M(\pi_e)] \leq \1{\Petg(X) = \varnothing, \pi_e(X) \neq a_0} Y_{max}$ follows from \cref{lemma:bounded_outcomes}.

    Next, $\1{\Pe(X) \neq \varnothing, \pi_e(X) = a_0} \E[Y \mid X, A = \pi_e, M = f_M(\pi_e)] = \1{\Pe(X) \neq \varnothing, \pi_e(X) = a_0} \E[Y \mid X, \Pi \in \Pe(x)]$ follows from \cref{lemma:bound_lemma}.

    Finally, $\1{\Pe(X) = \varnothing, \pi_e(X) = a_0} \E[Y \mid X, A = \pi_e, M = f_M(\pi_e)] \leq \1{\Pe(X) = \varnothing, \pi_e(X) = a_0} Y_{max}$ follows from \cref{lemma:bounded_outcomes}.
\end{proof}

\BoundDecomposition*

\begin{proof}
    We will start by observing that certain terms cancel out in the difference of the bound $U(\pi_e) - L(\pi_e)$.  We begin by recalling the definition of these bounds from~\cref{thm:bounds_with_control}, rearranged slightly.
    \begin{align}
        L(\pi_e) &= \E[\1{\Petl(X) \neq \varnothing}\1{\pi_e \neq a_0} \E[Y \mid X=x, \Pi \in \Petl(x)] \nonumber \\
        &\quad + \1{\Petl(X) = \varnothing}\1{\pi_e \neq a_0} Y_{min} \nonumber \\
        &\quad + \1{\Pe(X) \neq \varnothing}\1{\pi_e = a_0} \E[Y \mid X=x, \Pi \in \Pe(x)] \label{eq:decomp_cancel_1a} \\
        &\quad + \1{\Pe(X) = \varnothing}\1{\pi_e = a_0} Y_{min}] \label{eq:decomp_collect_1a} \\
        U(\pi_e) &= \E[\1{\Petg(X) \neq \varnothing}\1{\pi_e \neq a_0} \E[Y \mid X=x, \Pi \in \Petg(x)] \nonumber \\
        &\quad + \1{\Petg(X) = \varnothing}\1{\pi_e \neq a_0} Y_{max} \nonumber \\
        &\quad + \1{\Pe(X) \neq \varnothing}\1{\pi_e = a_0} \E[Y \mid X=x, \Pi \in \Pe(x)] \label{eq:decomp_cancel_1b} \\
        &\quad + \1{\Pe(X) = \varnothing}\1{\pi_e = a_0} Y_{max}] \label{eq:decomp_collect_1b}
    \end{align}
    By linearity of expectation, we can remove identical terms, i.e., we can observe that in the difference $U(\pi_e) - L(\pi_e)$, the terms in~\cref{eq:decomp_cancel_1a,eq:decomp_cancel_1b} cancel, leaving us with the following after collecting similar terms~\cref{eq:decomp_collect_1a,eq:decomp_collect_1b}.
    \begin{align}
        &U(\pi_e) - L(\pi_e) \nonumber \\
        &= \E[\1{\Petg(X) \neq \varnothing}\1{\pi_e \neq a_0} \E[Y \mid X=x, \Pi \in \Petg(x)] \label{eq:decomp_term_1} \\
        &\qquad - \1{\Petl(X) \neq \varnothing}\1{\pi_e \neq a_0} \E[Y \mid X=x, \Pi \in \Petl(x)] \label{eq:decomp_term_2} \\
        &\qquad + \1{\Petg(X) = \varnothing}\1{\pi_e \neq a_0} Y_{max} \label{eq:decomp_term_3} \\
        &\qquad - \1{\Petl(X) = \varnothing}\1{\pi_e \neq a_0} Y_{min} \label{eq:decomp_term_4} \\
        &\qquad + (\1{\Pe(X) = \varnothing}\1{\pi_e = a_0}) (Y_{\text{max}} - Y_{\text{min}}) ] \label{eq:decomp_term_5}
    \end{align}

    Now we will conduct two splits of indicators, to reflect finer-grained subgroups.  
    \begin{itemize}
        \item First, we note that we can partition the subset of $\cX$ satisfying $\{\Petl(X) \neq \varnothing\}$ into two further subsets: The set of $x$ where the \textit{only} agreeing policies are those with worse performance, and the set where there also exists agreeing policies with equal or greater performance. Note that, if there exists trial policies with exactly equal performance to the new policy, both $\{\Petl(X) \neq \varnothing\}$ and $\{\Petg(X) \neq \varnothing\}$ must be true. We can argue similarly for $\{\Petg(X) \neq \varnothing\}$, and write
            \begin{align*}
                \1{\Petl(X) \neq \varnothing} = \1{\Petl(X) \neq \varnothing, \Petg(X) = \varnothing} + \1{\Petl(X) \neq \varnothing, \Petg(X) \neq \varnothing} \\
                \1{\Petg(X) \neq \varnothing} = \1{\Petg(X) \neq \varnothing, \Petl(X) = \varnothing} + \1{\Petg(X) \neq \varnothing, \Petl(X) \neq \varnothing}
            \end{align*}
        \item Second, we note that we can partition the subset of $\cX$ satisfying $\{\Petl(X) = \varnothing\}$ into two further subsets: The set of $x$ where the {\it only} agreeing trial policies have greater performance, and the set where there are no agreeing trial policies at all. We can argue similarly for $\{\Petg(X) = \varnothing\}$, and write
            \begin{align*}
                \1{\Petl(X) = \varnothing} = \1{\Pe(X) = \varnothing} + \1{\Petl(X) = \varnothing, \Petg(X) \neq \varnothing} \\
                \1{\Petg(X) = \varnothing} = \1{\Pe(X) = \varnothing} + \1{\Petg(X) = \varnothing, \Petl(X) \neq \varnothing}
            \end{align*}
    \end{itemize}
    With these equalities in mind, we can rewrite the difference as follows by expanding terms
    \begin{align}
        &U(\pi_e) - L(\pi_e) \nonumber \\
        &= \E[
        \1{\Petg(X) \neq \varnothing, \Petl(X) \neq \varnothing}\1{\pi_e \neq a_0} \E[Y \mid X = x, \Pi \in \Petg(x)] && \text{From~\cref{eq:decomp_term_1}} \label{eq:penultimate_decomp_term_1a} \\
        &\qquad + \1{\Petg(X) \neq \varnothing, \Petl(X) = \varnothing}\1{\pi_e \neq a_0} \E[Y \mid X = x, \Pi \in \Petg(x)] && \text{From~\cref{eq:decomp_term_1}} \label{eq:penultimate_decomp_term_2a}\\
        &\qquad - \1{\Petl(X) \neq \varnothing, \Petg(X) \neq \varnothing}\1{\pi_e \neq a_0} \E[Y \mid X = x, \Pi \in \Petl(x)] && \text{From~\cref{eq:decomp_term_2}} \label{eq:penultimate_decomp_term_1b}\\
        &\qquad - \1{\Petl(X) \neq \varnothing, \Petg(X) = \varnothing}\1{\pi_e \neq a_0} \E[Y \mid X = x, \Pi \in \Petl(x)] && \text{From~\cref{eq:decomp_term_2}} \label{eq:penultimate_decomp_term_3a} \\
        &\qquad + \1{\Petg(X) = \varnothing, \Petl(X) \neq \varnothing}\1{\pi_e \neq a_0} Y_{max} && \text{From~\cref{eq:decomp_term_3}} \label{eq:penultimate_decomp_term_3b} \\
        &\qquad + \1{\Pe(X) = \varnothing}\1{\pi_e \neq a_0} Y_{max} && \text{From~\cref{eq:decomp_term_3}} \label{eq:penultimate_decomp_term_4a} \\
        &\qquad - \1{\Petl(X) = \varnothing, \Petg(X) \neq \varnothing} \1{\pi_e \neq a_0} Y_{min}&& \text{From~\cref{eq:decomp_term_4}} \label{eq:penultimate_decomp_term_2b} \\
        &\qquad - \1{\Pe(X) = \varnothing} \1{\pi_e \neq a_0} Y_{min}&& \text{From~\cref{eq:decomp_term_4}} \label{eq:penultimate_decomp_term_4b} \\
        &\qquad + (\1{\Pe(X) = \varnothing}\1{\pi_e=a_0}) (Y_{\text{max}} - Y_{\text{min}}) ]&& \text{From~\cref{eq:decomp_term_5}} \label{eq:penultimate_decomp_term_5}
    \end{align}
    And by rearranging terms, we arrive at 
    \begin{align*}
        &U(\pi_e) - L(\pi_e) \\
        &= \E[\1{\Pe(X) = \varnothing} (Y_{\text{max}} - Y_{\text{min}})  && \text{\cref{eq:penultimate_decomp_term_4a,eq:penultimate_decomp_term_4b,eq:penultimate_decomp_term_5}} \\
        &\qquad + \1{\Petg(X) \neq \varnothing, \Petl(X) \neq \varnothing}\1{\pi_e \neq a_0} \\
        &\qquad \qquad (\E[Y \mid X = x, \Pi \in \Petg(x)] - \E[Y \mid X = x, \Pi \in \Petl(x)]) && \text{\cref{eq:penultimate_decomp_term_1a,eq:penultimate_decomp_term_1b}} \\
        &\qquad + \1{\Petg(X) = \varnothing, \Petl(X) \neq \varnothing}\1{\pi_e \neq a_0} (Y_{\text{max}} - \E[Y \mid X = x, \Pi \in \Petl(x)]) && \text{\cref{eq:penultimate_decomp_term_3a,eq:penultimate_decomp_term_3b}} \\
        &\qquad + \1{\Petg(X) \neq \varnothing, \Petl(X) = \varnothing}\1{\pi_e \neq a_0} (\E[Y \mid X = x, \Pi \in \Petg(x)] - Y_{\text{min}})] && \text{\cref{eq:penultimate_decomp_term_2a,eq:penultimate_decomp_term_2b}}
    \end{align*}
    After rearranging terms slightly, this gives us the desired result of the form of $\delta(X, Y, \Pi)$ stated in the theorem. The fact that $\delta(X, Y, \Pi) \geq 0$ almost surely follows from the fact that $Y_{\text{max}} \geq \E[Y \mid C] \geq Y_{\text{min}}$ for any conditioning set $C$ according to~\cref{asmp:bounded_Y}, and the fact that $(\E[Y \mid X = x, \Pi \in \Petg(x)] - \E[Y \mid X = x, \Pi \in \Petl(x)])$ is nonnegative by~\cref{asmp:more_reliable_better}.
\end{proof}

\NonIdBounds*

\begin{proof}
    Recall that we make use of the more stringent sets of comparison arms
    \begin{align*}
        \Petl(x) &= \argmax_{\pi \in \Pel(x)} f_M(\pi) & \Petg(x) &= \argmin_{\pi \in \Peg(x)} f_M(\pi).
    \end{align*}
    From here, we will construct a pair of SCMs $\cM_L, \cM_U$ that satisfy our criteria, which are defined as follows, using $f^{L}$ to denote the structural equations under $\cM_L$ and $f^{U}$ to denote the structural equations under $\cM_U$, and $f$ (without a superscript) is used to denote structural equations that are shared between the two.

    \textbf{Shared Structural Equations for $\cM_L, \cM_U$}:  Both SCMs share the following equations that respect \cref{asmp:more_reliable_better,asmp:A_control_value}, which give rise to a shared distribution over $P(X, A, M, \Pi, D)$, and these can be chosen to match any such observed distribution.
    \begin{align*}
        D &= f_D(\epsilon_D), & X &= f_X(\epsilon_X) & \Pi &= \pi_D \\
        M &= f_M(\Pi), &  A &= \Pi(X) 
    \end{align*}

    \newcommand{\fyl}{f_Y^L}
    \newcommand{\fyu}{f_Y^U}
    \newcommand{\gyl}{g_Y^L}
    \newcommand{\gyu}{g_Y^U}
    \newcommand{\Sa}{{\bf \Pi}^a}
    \newcommand{\Ma}{\mathbf{M}_{a}}
    \newcommand{\mprev}{h_{\text{prev}}}
    \newcommand{\mnext}{h_{\text{next}}}
    \textbf{Differences between $\cM_L, \cM_U$}:  These SCMs will differ in terms of how $Y$ is generated.  Let $\fyl(A, X, M, \epsilon_Y)$, $\fyu(A, X, M, \epsilon_Y)$ denote the structural equations for $\cM_L, \cM_U$ respectively.  We will define these functions for every possible set of inputs, using knowledge of the true conditional distribution $P(Y \mid X = x, A = a, M = m)$ wherever this combination of inputs $(x, a, m)$ has positive density under the observed distribution $P(x, a, m) > 0$.

    We will define these functions constructively, by defining their behavior for every value of $(x, a, m)$.  
    \begin{enumerate}
        \item First, fix any value of $x \in \cX$. For this value of $x$, we need to define the value of $\fyl, \fyu$ for all values of $a \in \cA, m \in \R$.  To do so, let 
            \begin{equation*}
                \Sa(x) \coloneqq \{\pi \in \Pi \mid \pi(x) = a\}
            \end{equation*}
            be the set of all policies (possibly an empty set) that have output $a$ on the input $x$, and let $\Ma(x) = (m_1, \ldots, m_K)$ be the (ordered) set of performance values for these policies, where $m_1$ denotes the worst performance $f_M(\pi_i)$ (breaking ties arbitrarily) of all policies $\pi_i$ where $\pi_i \in \Sa(x)$, and $m_K$ denotes the best performance, where $K$ is the size of the set $\Sa(x)$.
        \item Now we consider any arbitrary $a \in \cA$, in addition to our fixed $x \in \cX$.  Here, there are two cases to consider:
        \begin{itemize}
            \item 
                If $\Sa(x)$ is empty for this $a$, then for all $m \in \R$, we let $\fyl(x, a, m, \epsilon_Y) = Y_{\text{min}}$, $\fyu(x, a, m, \epsilon_Y) = Y_{\text{max}}$.  In other words, at this point in $x$, if no trialed policy takes action $a$, we assume the worst (for $\fyl$) and the best (for $\fyu$) possible outcomes.  We can easily verify that both $\fyl, \fyu$ satisfy~\cref{asmp:more_reliable_better,asmp:A_control_value} since the output is a constant for each function, and satisfy~\cref{asmp:bounded_Y} since the output remains bounded.
            \item 
                If $\Sa(x)$ (and consequently $\Ma(x)$) is non-empty, then we first define the behavior of $\fyl, \fyu$ at all the (observable) performance values in $\Ma(x)$ to match the conditional distribution $P(Y \mid x, a, m)$.
                \begin{align*}
                    \fyu(x, a, m, \epsilon_Y) = \fyl(x, a, m, \epsilon_Y) \sim P(Y \mid x, a, m), \forall m \in \Ma(x),
                \end{align*}
                with the additional constraint that $\fyl, \fyu$ are constant with respect to $m$ when $a = a_0$, which itself must be achievable since we assume that the true SCM generating $P$ adheres to this constraint by~\cref{asmp:A_control_value}. Note that we can always achieve the equivalence of distribution shown above by taking $\epsilon_Y$ to be a uniform random variable in $[0, 1]$, and defining our function as sampling from $P(Y \mid x, a, m)$ using the inverse CDF $f_Y(x, a, m, \epsilon_Y) = F_{Y \mid x, a, m}^{-1}(\epsilon_Y)$. Because we assume that $P(Y \mid x, a, m)$ does not violate our assumptions, it should be clear that $\fyl(x, a, m, \epsilon_Y), \fyu(x, a, m, \epsilon_Y)$ do not violate our assumptions for values of $m \in \Ma(x)$. In addition, we have that for any $m \not\in \cM(x)$, our construction above does not violate~\cref{asmp:A_control_value} (since in this case, $\fyl(x, a_0, m, \epsilon_Y)$ is constant for all values of $m$).

                We have now defined the behavior of $\fyl, \fyu$ when $a = a_0$, and when $a \neq a_0, m \in \Ma(x)$.
                Now it remains to define the behavior of $\fyl, \fyu$ for other values of $m$ when $a \neq a_0$.  For any value $m' \not\in \Ma(x)$, there are three possible scenarios: It is smaller than the smallest value ($m_1$), larger than the largest value $m_K$, or in-between two values, which we denote $\mprev(m') < m' < \mnext(m')$ without loss of generality, where $\mprev(m') = \max(m \in \Ma(x) \mid m < m')$ and $\mnext(m') = \min(m \in \Ma \mid m > m')$.  Here, $\mprev(m')$ corresponds to the performance of the \enquote{next-worst} policy among those deployed, and $\mnext(m')$ corresponds to the performance of the \enquote{next-best} policy. We define behavior on these sets as follows
                \begin{align*}
                    \fyl(x, a, m', \epsilon_Y) &= \begin{cases}
                        Y_{\text{min}}, &\ \text{if } m' < \min(\Ma(x)) \\
                        \fyl(x, a, \mprev(m'), \epsilon_Y) &\ \text{if } m' > \min(\Ma(x)), m \not\in \Ma(x)
                    \end{cases}\\
                    \fyu(x, a, m', \epsilon_Y) &= \begin{cases}
                        Y_{\text{max}}, &\ \text{if } m' > \max(\Ma(x)) \\
                        \fyu(x, a, \mnext(m'), \epsilon_Y) &\ \text{if } m' < \max(\Ma(x)), m \not\in \Ma(x)
                    \end{cases}
                \end{align*}
                In words, we have \enquote{filled in} the missing gaps in $\fyl,\fyu$ for all values of $m$ using piecewise constant functions: For any $m' \not\in \Ma(x)$, if $m'$ is worse than any observed performance, we assume the worst-case for the lower bound, and if $m'$ is better than any observed performance, we assume the best-case for the upper bound.  Otherwise, we have $\mprev(m') < m'$ and/or $m' < \mnext(m')$, and we assume for the lower bound that the outcomes at $m'$ match those at $\mprev(m')$, and for the upper bound we assume the outcomes at $m'$ match that at $\mnext(m')$.  Because we have maintained monotonicity with respect to $m$, our construction continues to satisfy our core assumptions.
        \end{itemize}
    \item We have now fully defined $\fyl, \fyu$, having defined these functions for any input $(x, a, m)$, and shown that they satisfy our core assumptions~\cref{asmp:more_reliable_better,asmp:A_control_value,asmp:bounded_Y}.  Putting it together, we have it that 
        \begin{align}
            \fyl(x, a, m, \epsilon_Y) &= \begin{cases}
            Y_{\text{min}}, &\ \text{if } \Sa(x) = \varnothing,\\
            Y_{\text{min}}, &\ \text{if } \Sa(x) \neq \varnothing, m < \min(\Ma(x)) \\
            F_{Y \mid x, a, \mprev(m)}^{-1}(\epsilon_Y), &\ \text{if } \Sa(x) \neq \varnothing, m > \min(\Ma(x)), m \not\in \Ma(x), \\
            F_{Y \mid x, a, m}^{-1}(\epsilon_Y), &\ \text{if } \Sa(x) \neq \varnothing, m \in \Ma(x),
        \end{cases} \label{eq:def_fyl} \\
            \fyu(x, a, m, \epsilon_Y) &= \begin{cases}
            Y_{\text{max}}, &\ \text{if } \Sa(x) = \varnothing,\\
            Y_{\text{max}}, &\ \text{if } \Sa(x) \neq \varnothing, m > \max(\Ma(x)) \\
            F_{Y \mid x, a, \mnext(m)}^{-1}(\epsilon_Y), &\ \text{if } \Sa(x) \neq \varnothing, m < \max(\Ma(x)), m \not\in \Ma(x), \\
            F_{Y \mid x, a, m}^{-1}(\epsilon_Y), &\ \text{if } \Sa(x) \neq \varnothing, m \in \Ma(x),
        \end{cases}\label{eq:def_fyu}
        \end{align}
        where $F^{-1}_{Y \mid x, a, m}$ is the inverse conditional CDF of $Y$ given $X, A, M$, derived from $P$, and where $\Sa(x) \coloneqq \{\pi \in \Pi \mid \pi(x) = a\}$ and $\Ma(x) \coloneqq \{f_M(\pi) : \pi \in \Sa(x)\}$, as defined previously above. 
    \end{enumerate}

    \paragraph{Verifying conditions} We have now defined the SCMs $\cM_L, \cM_U$, and shown that these SCMs are both consistent with our assumptions.  We will now briefly verify that both SCMs give rise to the same observed distribution $P(X, Y, A, M, \Pi, D)$, and then demonstrate that these SCMs achieve the upper and lower bounds that are given in~\cref{thm:bounds_with_control}.

    First, we have it by construction that both SCMs yield the observed distribution $P(X, Y, A, M, \Pi, D)$, so it remains to demonstrate that they agree with the observed distribution $P(Y \mid X, A, M, D, \Pi)$, which we can write equivalently as $P(Y \mid X, A, M)$, since $D, \Pi \ci Y \mid X, A, M$ under our assumed data-generating process.  Note that $P(Y \mid X, A, M, \Pi, D)$ is only well-defined for $x, a, m$ with positive density (if $X$ is continuous) or probability mass (if $X$ is discrete).  Assuming that $P(x) > 0$ for all $x \in \cX$, we have constructed $\fyl, \fyu$ to agree with $P(Y \mid x, a, m)$ for all $a, m$ where there exists a trialed policy $\pi$ that outputs $a = \pi(x)$ with performance $m = f_M(\pi)$.  We note that for any other value of $a', m'$, we have it that $P(a', m' \mid x) = 0$, and hence the entire set $(x, a', m')$ has zero density, and it is precisely on these never-observed subsets of inputs where $\cM_L, \cM_U$ disagree.

    Second, we can verify that the policy values under $f_Y^L$ and $f_Y^U$ evaluate to $L(\pi_e)$ and $U(\pi_e)$, respectively. Recalling that $Y(A = \pi_e, M=f_M(\pi_e)) = f_Y(X, \pi_e(X), f_M(\pi_e), \epsilon_Y)$, and recalling~\cref{eq:lb_partition_equals_one}, we can write that under $\cM_L$,
    \begin{align}
        \E_{\cM_L}[Y(A=\pi_e, M=f_M(\pi_e))] &= \E\big[
        \1{\Petl(X) \neq \varnothing, \pi_e(X) \neq a_0} \fyl(X, \pi_e(X), f_M(\pi_e), \epsilon_Y)\label{eq:tightness_lb_component_2}\\
        &\qquad + \1{\Petl(X) = \varnothing, \pi_e(X) \neq a_0} \fyl(X, \pi_e(X), f_M(\pi_e), \epsilon_Y)\label{eq:tightness_lb_component_3}\\
        &\qquad + \1{\Pe(X) \neq \varnothing, \pi_e(X) = a_0} \fyl(X, \pi_e(X), f_M(\pi_e), \epsilon_Y)\label{eq:tightness_lb_component_4}\\
        &\qquad + \1{\Pe(X) = \varnothing, \pi_e(X) = a_0} \fyl(X, \pi_e(X), f_M(\pi_e), \epsilon_Y)\label{eq:tightness_lb_component_5}]
    \end{align}
    where we can consider each component in the sum individually by linearity of expectation, and the fact that $\E[\1{x \in \Omega}f(x, \epsilon_Y)] = \E[\1{x \in \Omega}\E[f(x, \epsilon_Y) \mid x \in \Omega]]$ for any set $\Omega$.  We consider each term under the definition of $\fyl$ in~\cref{eq:def_fyl}. 
    \begin{itemize}
        \item For~\cref{eq:tightness_lb_component_2}, we can observe that for all $x \in \cX$ satisfying $\Petl(X) \neq \varnothing, a = \pi_e(x)$, the set $\Sa(x)$ is non-empty by definition of $\Petl(X)$, and moreover that $m_e \geq \min(\Ma(x))$.  As a result, we have it that $\fyl(x, a, m_e, \epsilon_Y) = F^{-1}_{Y \mid x, a, \mprev(m_e)}(\epsilon_Y) \sim P(Y \mid x, a, \mprev(m_e))$, and thus this term can be re-written as 
            \begin{equation}\label{eq:tightness_lb_collect_2}
                \1{\Petl(X) \neq \varnothing, \pi_e(X) \neq a_0} \E[Y \mid X, \Pi \in \Petl(X)] 
            \end{equation}
            since for any $\pi_i \in \Petl(X)$, we have it that $\pi_i(X) = \pi_e(X)$ and $f_M(\pi_i) = \mprev(m_e)$ by definition of $\mprev(m_e)$ and $\Petl(X)$.
        \item For~\cref{eq:tightness_lb_component_3}, we can observe that for all $x \in \cX$ satisfying $\Petl(X) = \varnothing, a = \pi_e(x)$, the set $\Sa(x)$ is either empty, or non-empty where $f_M(\pi_e) < \min(\Ma(x))$, by definition of $\Petl(X)$.  In either case, we have it that $\fyl = Y_{\text{min}}$, and so this term is equal to 
            \begin{equation}\label{eq:tightness_lb_collect_3}
                \1{\Petl(X) = \varnothing, \pi_e(X) \neq a_0} Y_{\text{min}}
            \end{equation}
        \item For~\cref{eq:tightness_lb_component_4}, we can observe that for all $x \in \cX$ satisfying $\Pe(X) \neq \varnothing, a = \pi_e(x) = a_0$, the set $\Sa(x)$ is non-empty by definition of $\Pe(X)$, and moreover that $\fyl$ is invariant to the choice of $m$, and so 
        this term is equal to 
        $\1{\Pe(X) \neq \varnothing, \pi_e(X) = a_0} \E[Y \mid X, A = \pi_e(X), M = f_M(\pi_e)]$, which is equal to 
            \begin{equation}\label{eq:tightness_lb_collect_4}
                 \1{\Pe(X) \neq \varnothing, \pi_e(X) = a_0} \E[Y \mid X, \Pi \in \Pe(X)] 
            \end{equation}
        from \cref{eq:inequality_bounds_outcome_neutral} of \cref{lemma:bound_lemma}
        \item For~\cref{eq:tightness_lb_component_5}, we can observe that for all $x \in \cX$ satisfying $\Pe(X) = \varnothing$, the set $\Sa(x)$ is empty by definition of $\Pe(X)$, and so $\fyl = Y_{\text{min}}$. Thus, this term is equal to
            \begin{equation}\label{eq:tightness_lb_collect_5}
                \1{\Pe(X) = \varnothing, \pi_e(X) = a_0} Y_{\text{min}}
            \end{equation}
    \end{itemize}
    Collecting terms~\cref{eq:tightness_lb_collect_2,eq:tightness_lb_collect_3,eq:tightness_lb_collect_4,eq:tightness_lb_collect_5} gives us that 
    \begin{align*}
        &\E_{\cM_L}[Y(A=\pi_e, M=f_M(\pi_e))] \nonumber \\
        &\quad = \E\big[\1{\Petl(X) \neq \varnothing, \pi_e(X) \neq a_0} \E[Y \mid X, \Pi \in \Petl(X)] \\
        &\qquad + \1{\Petl(X) = \varnothing, \pi_e(X) \neq a_0} Y_{\text{min}} \\
        &\qquad + \1{\Pe(X) \neq \varnothing, \pi_e(X) = a_0} \E[Y \mid X, \Pi \in \Pe(X)]\\
        &\qquad + \1{\Pe(X) = \varnothing, \pi_e(X) = a_0} Y_{\text{min}} \big]
    \end{align*}
    which is equivalent to $L(\pi_e)$ and completes the proof for the lower bound.  For the upper bound, the argument is similar, and roughly symmetric, but uses the partition given by~\cref{eq:ub_partition_equals_one}.

    \paragraph{Conclusion} Because it is possible to construct structural causal models that are consistent with our assumptions and that have counterfactual policy values that are exactly $L(\pi_e)$ and $U(\pi_e)$, the bounds in \cref{thm:bounds_with_control} cannot be improved without further assumptions.
\end{proof}

\EmpiricalEstimation*

\begin{proof}
    First, note that the conditional expectation of $Y$ is always finite due to~\cref{asmp:bounded_Y}.

    \newcommand{\genS}{\cS'}
    \newcommand{\genX}{\cX'}
    \newcommand{\genPol}{{\bf \Pi}'}
    \begin{lemma}\label{lemma:helper_empirical}
    Let $\genPol(X)$ be a function that maps from $\cX$ to any subset (including the empty set) of $\Pi$, and let $\genX$ be a subset of $\cX$.  If $P(\Pi \in \genPol(x)) > 0, \forall x \in \genX$, then 
    \begin{equation}
        \E\left[Y \frac{\1{\Pi \in \genPol(X)}}{P(\Pi \in \genPol(X))} \1{X \in \genX}\right] = \E[\E[Y \mid \Pi \in \genPol(X), X] \1{X \in \genX}]
    \end{equation}
    \end{lemma}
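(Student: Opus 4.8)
The plan is to prove this as a standard inverse-probability-weighting identity, where the crucial ingredient is the randomization $\Pi \ci X$ implied by~\cref{asmp:scm} (since $\Pi = \pi_D$ with $D = f_D(\epsilon_D)$, $X = f_X(\epsilon_X)$, and $\epsilon_D, \epsilon_X$ independent). First I would apply the law of iterated expectation, conditioning on $X$, so that the left-hand side becomes $\E\big[\,\E\big[\,Y \frac{\1{\Pi \in \genPol(X)}}{P(\Pi \in \genPol(X))}\1{X \in \genX} \;\big|\; X\,\big]\,\big]$. The key observation is that, once we condition on a fixed value $X = x$, the set $\genPol(x)$ is deterministic, so both the factor $\1{X \in \genX}$ and the weight $P(\Pi \in \genPol(x))$ are constants with respect to the inner expectation; they may therefore be pulled outside it, leaving only $\E[\,Y\1{\Pi \in \genPol(x)} \mid X = x\,]$ inside.

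Next I would expand this remaining inner expectation by conditioning on the event $\{\Pi \in \genPol(x)\}$, writing $\E[\,Y\1{\Pi \in \genPol(x)} \mid X = x\,] = \E[\,Y \mid \Pi \in \genPol(x), X = x\,]\cdot P(\Pi \in \genPol(x) \mid X = x)$. Here I invoke randomization: because $\Pi \ci X$, we have $P(\Pi \in \genPol(x) \mid X = x) = P(\Pi \in \genPol(x))$, which is precisely the weight in the denominator that was factored out in the previous step. Substituting this back, the numerator weight cancels the reciprocal weight, and reassembling the outer expectation yields $\E\big[\1{X \in \genX}\,\E[\,Y \mid \Pi \in \genPol(X), X\,]\big]$, the claimed right-hand side.

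The one technical point requiring care is well-definedness of the conditional expectation and of the weight. For $x \in \genX$ the hypothesis $P(\Pi \in \genPol(x)) > 0$ guarantees that both $\E[\,Y \mid \Pi \in \genPol(x), X = x\,]$ and the division by $P(\Pi \in \genPol(x))$ are legitimate; for $x \notin \genX$ the factor $\1{X \in \genX}$ annihilates the summand, so a possibly undefined weight there is immaterial and we simply take the product to be zero. Finiteness of every expectation involved is immediate from~\cref{asmp:bounded_Y}, since $Y$ is bounded. I do not anticipate a serious obstacle: the only step beyond bookkeeping is the appeal to $\Pi \ci X$ to replace the conditional weight $P(\Pi \in \genPol(x) \mid X = x)$ by its marginal value $P(\Pi \in \genPol(x))$, and this follows directly from the structure of the SCM in~\cref{asmp:scm}.
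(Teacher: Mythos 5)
Your proposal is correct and follows essentially the same route as the paper's own proof: iterate expectations over $X$, expand $\E[Y\,\1{\Pi \in \genPol(x)} \mid X = x]$ as $\E[Y \mid \Pi \in \genPol(x), X = x]\,P(\Pi \in \genPol(x) \mid X = x)$, and invoke $\Pi \ci X$ from the randomized design to cancel the weight. Your handling of well-definedness (positivity on $\genX$, the indicator annihilating terms off $\genX$) matches the paper's treatment as well.
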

    \begin{proof}
    \begin{align}
        \E\left[Y \frac{\1{\Pi \in \genPol(X)}}{P(\Pi \in \genPol(X))} \1{X \in \genX}\right] &= \E\left[\E[Y \1{\Pi \in \genPol(X)} \mid X] \frac{\1{X \in \genX}}{P(\Pi \in \genPol(X))} \right] \\
        &= \E\left[\E[Y \mid \Pi \in \genPol(X), X] P(\Pi \in \genPol(X) \mid X) \frac{\1{X \in \genX}}{P(\Pi \in \genPol(X))} \right] \\
        &= \E[\E[Y \mid \Pi \in \genPol(X), X] \1{X \in \genX}]
    \end{align}
    where the first equality is well-defined on both sides by the assumption that for any $X \in \genX$, $P(\Pi \in \genPol(X)) > 0$. For the second-to-last line, note that this follows from the basic fact that $A, B, C$
    \begin{align*}
    \E[A \cdot \1{B \in \mathcal{B}} \mid C] &= \E[A \cdot \1{B \in \mathcal{B}} \mid B \in \mathcal{B}, C] P(B \in \mathcal{B} \mid C) \\
    &\quad + \E[A \cdot \1{B \in \mathcal{B}} \mid B \not\in \mathcal{B}, C] P(B \not\in \mathcal{B} \mid C) \\
    &= \E[A \mid B \in \mathcal{B}, C] P(B \in \mathcal{B} \mid C) 
    \end{align*}
    and the last line follows from the fact that $\Pi \ci X$ under~\cref{asmp:scm}, so that $P(\Pi \in \genPol(X) \mid X) = P(\Pi \in \genPol(X))$.
    \end{proof}

    Note that~\cref{lemma:helper_empirical} applies to all of the pairs (e.g., $\Pe(X) \neq \varnothing$ and $\{\Petg(X) \neq \varnothing, \pi_e(X) \neq a_0\}$) used in~\cref{cor:empirical_estimation}. Thus, we can directly write the following through linearity of expectations and two applications of \cref{lemma:helper_empirical}.
    
    \begin{align*}
        L(\pi_e) &= \E[\1{\Petl(X) \neq \varnothing}\1{\pi_e \neq a_0} \E[Y \mid X, \Pi \in \Petl(X)] \\
        &\quad + \1{\Petl(X) = \varnothing}\1{\pi_e \neq a_0} Y_{min} \\
        &\quad + \1{\Pe(X) \neq \varnothing}\1{\pi_e = a_0} \E[Y \mid X, \Pi \in \Pe(X)] \\
        &\quad + \1{\Pe(X) = \varnothing}\1{\pi_e = a_0} Y_{min}] \\
        &= \E[Y \frac{\1{\Pi \in \Petl(X)}}{P(\Pi \in \Petl(X))} \1{\Petl(X) \neq \varnothing}\1{\pi_e \neq a_0} \\
        &\quad + Y_{min} \1{\Petl(X) = \varnothing}\1{\pi_e \neq a_0} \\
        &\quad + \frac{\1{\Pi \in \Pe(X)}}{P(\Pi \in \Pe(X))} \1{\Pe(X) \neq \varnothing}\1{\pi_e = a_0} \\
        &\quad + Y_{min} \1{\Pe(X) = \varnothing}\1{\pi_e = a_0}]
    \end{align*}
    We apply~\cref{lemma:helper_empirical} in two instances: one where the indicator function $\1{\Petl(X) \neq \varnothing}$ is turned on and one where the indicator function $\1{\Pe(X) \neq \varnothing}$ is turned on. In the case where $\{\Petl(X) \neq \varnothing\}$ is true, $P(\Pi \in \Petl(X)) > 0$ is also true because there is at least one trial policy in the set $\Petl(X)$. Similarly, when $\{\Pe(X) \neq \varnothing\}$ is satisfied, $P(\Pi \in \Pe(X)) > 0$ will also be satisfied. Thus, when applying~\cref{lemma:helper_empirical}, the assumption required in the lemma that $P(\Pi \in \genPol(x)) > 0$ is satisfied by the decomposition of the lower bound, and we do not require any additional assumptions regarding the probability of deploying a particular set of trial models.
    
    Because the sets $(\Petl(X) \neq \varnothing, \pi_e \neq a_0)$, $(\Petl(X) = \varnothing, \pi_e \neq a_0)$, $(\Pe(X) \neq \varnothing, \pi_e = a_0)$, and $(\Pe(X) = \varnothing, \pi_e = a_0)$ are disjoint, only one product of the indicator functions inside the expectation above ever evaluates to $1$. Thus, we can equivalently express the expression inside the expectation above as the piecewise function
    \begin{align*}
        &\psi_L(Y, X, \Pi) \\
        &\coloneqq \begin{cases}
            Y \cdot \frac{\1{\Pi \in \Petl(X)}}{P(\Pi \in \Petl(X))}, &\ \text{if } \Petl(X) \neq \varnothing, \pi_e(X) \neq a_0 \\
            Y_{\text{min}}, &\ \text{if } \Petl(X) = \varnothing, \pi_e(X) \neq a_0 \\
            Y \cdot \frac{\1{\Pi \in \Pe(X)}}{P(\Pi \in \Pe(X))}, &\ \text{if } \Pe(X) \neq \varnothing, \pi_e(X) = a_0 \\
            Y_{\text{min}}, &\ \text{if } \Pe(X) = \varnothing, \pi_e(X) = a_0
        \end{cases}
    \end{align*}
    Thus, $L(\pi_e) = \E[\psi_L(Y, X, \Pi)]$.

    The proof for $U(\pi_e)$ follows similarly. We directly write the following through linearity of expectations and two applications of \cref{lemma:helper_empirical}.
    \begin{align*}
        U(\pi_e) &= \E[\1{\Petg(X) \neq \varnothing}\1{\pi_e \neq a_0} \E[Y \mid X, \Pi \in \Petg(X)] \\
        &\quad + \1{\Petg(X) = \varnothing}\1{\pi_e \neq a_0} Y_{max} \\
        &\quad + \1{\Pe(X) \neq \varnothing}\1{\pi_e = a_0} \E[Y \mid X, \Pi \in \Pe(X)] \\
        &\quad + \1{\Pe(X) = \varnothing}\1{\pi_e = a_0} Y_{max}] \\
        &= \E[Y \frac{\1{\Pi \in \Petg(X)}}{P(\Pi \in \Petg(X))} \1{\Petg(X) \neq \varnothing}\1{\pi_e \neq a_0} \\
        &\quad + Y_{max} \1{\Petg(X) = \varnothing}\1{\pi_e \neq a_0} \\
        &\quad + \frac{\1{\Pi \in \Pe(X)}}{P(\Pi \in \Pe(X))} \1{\Pe(X) \neq \varnothing}\1{\pi_e = a_0} \\
        &\quad + Y_{max} \1{\Pe(X) = \varnothing}\1{\pi_e = a_0}]
    \end{align*}
    Because this different enumeration of the sets $(\Petg(X) \neq \varnothing, \pi_e \neq a_0)$, $(\Petg(X) = \varnothing, \pi_e \neq a_0)$, $(\Pe(X) \neq \varnothing, \pi_e = a_0)$, and $(\Pe(X) = \varnothing, \pi_e = a_0)$ is also disjoint, only one product of the indicator functions inside the expectation above ever evaluates to $1$. Thus, we can equivalently express the expression inside the expectation above as the piecewise function
    \begin{align*}
        &\psi_U(Y, X, \Pi) \\
        &\coloneqq \begin{cases}
            Y \cdot \frac{\1{\Pi \in \Petg(X)}}{P(\Pi \in \Petg(X))}, &\ \text{if } \Petg(X) \neq \varnothing, \pi_e(X) \neq a_0 \\
            Y_{\text{max}}, &\ \text{if } \Petg(X) = \varnothing, \pi_e(X) \neq a_0 \\
            Y \cdot \frac{\1{\Pi \in \Pe(X)}}{P(\Pi \in \Pe(X))}, &\ \text{if } \Pe(X) \neq \varnothing, \pi_e(X) = a_0 \\
            Y_{\text{max}}, &\ \text{if } \Pe(X) = \varnothing, \pi_e(X) = a_0
        \end{cases}
    \end{align*}
    Thus, $U(\pi_e) = \E[\psi_U(Y, X, \Pi)]$.

    To show asymptotic normality, it suffices to observe that $\psi_U, \psi_L$ are known functions of the data, such that the problem reduces to mean estimation using samples.  The asymptotic behavior is then just a consequence of the central limit theorem \citep{Vaart_1998}, and the validity of the confidence intervals follows from the fact that we use the $1 - \alpha/2$ lower bound for $L$, such that the probability of failing to cover $L$ is asymptotically $1 - \alpha/2$, and similarly the $1 - \alpha/2$ upper bound for $U$.  The validity of the given interval follows from application of the union bound.
\end{proof}

\end{document}